 \newtheorem{thm}{Theorem}[section]
 \newtheorem{cor}[thm]{Corollary}
 \newtheorem{lemma}[thm]{Lemma}
 \newtheorem{prop}[thm]{Proposition}
 \theoremstyle{definition}
 \newtheorem{defn}[thm]{Definition}
 \newtheorem{rem}[thm]{Remark}
 \numberwithin{equation}{section}
\newcommand{\caA}{{\mathcal A}}
\newcommand{\caB}{{\mathcal B}}
\newcommand{\caH}{{\mathcal H}}
\newcommand{\caL}{{\mathcal L}}
\newcommand{\caN}{{\mathcal N}}
\newcommand{\caO}{{\mathcal O}}
\newcommand{\caP}{{\mathcal P}}
\newcommand{\bbE}{{\mathbb E}}
\newcommand{\bbI}{{\mathbb I}}
\newcommand{\bbN}{{\mathbb N}}
\newcommand{\bbR}{{\mathbb R}}
\newcommand{\bbZ}{{\mathbb Z}}
\newcommand{\id}{\bbI}
\newcommand{\ep}[1]{\mathrm{e}^{#1}}
\newcommand{\Inorm}{\vvvert}
\DeclareFontFamily{U}{matha}{\hyphenchar\font45}
\DeclareFontShape{U}{matha}{m}{n}{
<-6> matha5 <6-7> matha6 <7-8> matha7
<8-9> matha8 <9-10> matha9
<10-12> matha10 <12-> matha12
}{}
\DeclareSymbolFont{matha}{U}{matha}{m}{n}
\DeclareFontFamily{U}{mathx}{\hyphenchar\font45}
\DeclareFontShape{U}{mathx}{m}{n}{
<-6> mathx5 <6-7> mathx6 <7-8> mathx7
<8-9> mathx8 <9-10> mathx9
<10-12> mathx10 <12-> mathx12
}{}
\DeclareSymbolFont{mathx}{U}{mathx}{m}{n}
\DeclareMathDelimiter{\vvvert} {0}{matha}{"7E}{mathx}{"17}%
\DeclarePairedDelimiterX{\normiii}[1]
{\vvvert}
{\vvvert}
{\ifblank{#1}{\:\cdot\:}{#1}}
\DeclareMathOperator\supp{supp}
\title[Product formulae]{Trotter product formulae for {\Large$*$}-automorphisms of quantum lattice systems}
\author{Sven Bachmann}
\address{Department of Mathematics \\ University of British Columbia \\ Vancouver, BC V6T 1Z2 \\ Canada}
\email{sbach@math.ubc.ca}
\author{Markus Lange}
\address{Mathematics Area \\ SISSA \\ Via Bonomea 265 \\ 34136, Trieste \\ Italy}
\email{mlange@sissa.it}
\date{\today}
\begin{document}

\begin{abstract}
We consider the dynamics $t\mapsto\tau_t$ of an infinite quantum lattice system that is generated by a local interaction. If the interaction decomposes into a finite number of terms that are themselves local interactions, we show that $\tau_t$ can be efficiently approximated by a product of $n$ automorphisms, each of them being an alternating product generated by the individual terms. For any integer $m$, we construct a product formula (in the spirit of Trotter) such that the approximation error scales as $n^{-m}$. Our bounds hold in norm, pointwise for algebra elements that are sufficiently well approximated by finite volume observables.
\end{abstract}

\maketitle

\section{Introduction}
For any two matrices $A$ and $B$, Lie proved the celebrated product formula
\begin{equation}\label{Lie}
\ep{A+B} = \lim_{n\to\infty}\left(\ep{A/n}\ep{B/n}\right)^n.
\end{equation}
There is a long line of similar formulae of increasing generality, pioneered by Trotter~\cite{trotter1959product}, simplified by Chernoff~\cite{chernoff1968note} for semigroups on Banach spaces, see e.g.~\cite{BratRob1}. In the particular setting of quantum mechanics where $A,B$ are densely defined semibounded self-adjoint operators and $\exp(i t A),\exp(i t B)$ and $\exp(i t (A+B))$ are the corresponding unitary groups, the product formula was proved under general assumptions by Kato~\cite{kato1978trotter} and Ichinose~\cite{ichinose1980product}, see also~\cite{ReedSimonI}. It plays a crucial role in functional integration, see in particular~\cite{nelson1964feynman}. For related results in the context of the quantum Zeno effect, we refer to~\cite{exner2005product}, and point further to~\cite{zagrebnov2019gibbs} for a historical overview, in particular in the case of Gibbs semigroups. 

The recent interest in proving general product formulae with explicit control of the rate of convergence has been motivated by two related developments in many-body quantum systems. On the one hand in quantum information theory, operator products arise as quantum circuits and a product formula is interpreted as a \emph{simulation algorithm} for the time evolution of a quantum system~\cite{Llo96,BerAhoCleSan06,wiebe2011simulating}. On the other hand in condensed matter physics, operator products are referred to as \emph{finite depth quantum circuits} and play a central role in the classification of gapped phases~\cite{chen2010local}, as they can be used to define the very notion of equivalence of states. In both cases, the concepts have recently been tested experimentally, see e.g.~\cite{NatureSimulationExp,SmiKimPolKno19}.

In both applications, the rate of convergence of the product formula to the full dynamics is of crucial importance: for quantum simulation because it determines the number of quantum gates required to simulate to a given error, for gapped phases because it relates to the degree of entanglement of ground states.  The standard general product formulae yield a rather poor scaling of either $n^{-1/2}$ or at best $n^{-1}$, see again~\cite{BratRob1}. In fact, in the case of Gibbs semigroups, there exist pairs of unbounded operators for which the norm difference is \emph{lower bounded} by $L(t) n^{-1}$, see~\cite{zagrebnov2019gibbs}. Furthermore, beyond the mere scaling, sharp constants are essential and may prove fatal in a many-body setting. Indeed, for a lattice system having $N$ degrees of freedom, the error diverges as $N\to\infty$, even in the strong operator topology, which is the natural topology as soon as $A,B$ are unbounded. This is related to the infrared catastrophe: If two states are locally close to each other but the error extends to spatial infinity, then they are in fact orthogonal.

In this work, we consider $d$-dimensional quantum lattice systems in the \emph{infinite volume limit}. The dynamics is an automorphism group $t\mapsto\tau^\Phi_t$ of the quasi-local algebra (which is a C*-algebra) generated by a local Hamiltonian formally given by
\begin{equation}\label{Hamiltonians}
H = \sum_{X}\Phi(X) = \sum_{j=1}^k K_j
\end{equation}
where the $K_j$'s correspond to an arbitrary grouping of the interaction terms. We provide product formulae and prove explicit bounds for sufficiently localized observables: For any $m\in\bbN$ there is a product automorphism denoted $\pi_{t,n}^{(m)}$ such that
\begin{equation}\label{InformalResult}
\left\Vert  \tau^\Phi_t(O) - \pi_{t,n}^{(m)}(O) \right\Vert
\leq C_{m,t,k}(O) n^{-m}
\end{equation}
for any almost local observable. The constant $C_{m,t,k}(O)$ depends on both the observable~$O$ and the Hamiltonian. The convergence we consider here is in the C*-norm, pointwise for sufficiently localized elements of the algebra. In other words, we consider the strong topology of operators 
acting on the C*-algebra. This is a purely Banach space result.

As in~(\ref{Lie}), $\pi_{t,n}^{(m)}$ are compositions of the individual dynamics generated by each $K_j$ individually. The general form of the product $\pi_{t,n}^{(m)}$ was proposed by Suzuki~\cite{suzuki1991general} although in the Hilbert space setting, see also~\cite{hatano2005finding}, and used recently by Childs et al.~\cite{PhysRevX.11.011020}, but only for finite systems with bounds that diverge in the volume.

From a technical point of view, we find it convenient to consider almost exclusively the \emph{interaction norms}, see Definiton~(\ref{def:int norm}) below, that measure the local size of extensive observables of the type~(\ref{Hamiltonians}). We extend their definition to be able to consider interactions localized in (possibly infinite) subsets of the lattice (see also~\cite{monaco2019adiabatic}) and remark that this construction is well-suited to discuss \emph{almost local observables}. Crucially, the unbounded $*$-derivation formally given by
\begin{equation*}
\sum_{X}i[\Phi(X),\cdot]
\end{equation*}
is well-defined on the set of almost local observables: it maps this set into itself, and we quantify explicitly the weakening of the localization induced by its action in terms of the interaction norms. 

We further wish to point out two related results. Firstly, a slightly different approach to a product formula was taken in~\cite{sahinoglu2021hamiltonian}, focussing on the `quasi-adiabatic' properties of product formulae, namely the error when projected onto the ground state space. Secondly, a similar spatial product factorization with sharp error bounds was derived in~\cite{haah2021quantum}: it is not based on the Trotter strategy but it uses rather directly the Lieb-Robinson bound, see also~\cite{bachmann2017local}.

While the results hold for a general decomposition~(\ref{Hamiltonians}), in applications the factors $K_j$ will be chosen so as to be commuting Hamiltonians, namely each $K_j$ is a sum of \emph{mutually commuting} interaction terms $\Phi_j(X)$, i.e. $[\Phi_j(X),\Phi_j(X')] = 0$ for all $X,X'$. Such Hamiltonians have the property that the corresponding automorphism $\tau^{\Phi_j}_t(O)$ is strictly local in that the support of the observable $O$ grows at most by the range of the interaction, uniformly in the time $t$. Propagation, which is obviously present in the full dynamics $\tau_t^{\Phi}(O)$ arises then through the alternating action of the automorphisms $\{\tau^{\Phi_j}:j=1,\ldots,k\}$. While the Lieb-Robinson bound is at the heart of the proofs, the product formulae provide a very clear picture of the mechanism of propagation. 

In the context of quantum simulation, much attention is given to the error made in the approximation upon truncation of the product formula to a finite number of terms. As we shall see, the error has a complicated dependence on a number of parameters and we shall discuss this in detail later. We already point out now (i) that the error is exponential
in the total time $t$, 
(ii) that the number of factors in~$\pi_{t,n}^{(m)}$ is proportional to $n$ and to the number $k$ of factors in the decomposition of the Hamiltonian, and that it is exponential in the order $m$ of approximation, 
and (iii) that unlike in the original Trotter product formula, the times involved in the various factors of~$\pi_{t,n}^{(m)}$ are not all equal, although they are all of order $\frac{t}{n}$; in fact, the time evolution runs backwards for a fraction of the factors, giving rise to a fractal path, see Figure~\ref{fig:Nice plots} at the end of Section~\ref{sec:ArbOrder}. 

Finally, we comment on the relation of the present work to~\cite{ALPUs}. While we start with a Hamiltonian evolution and approximate it with a finite depth quantum circuit, \cite{ALPUs} goes beyond, although only in one dimension. The starting point there is an almost locality preserving unitary (ALPU, an automorphism satisfying a Lieb-Robinson bound), which is not \emph{a priori} generated by a Hamiltonian. Generalizing the index defined in~\cite{GNVW}, and in the case that the index of the ALPU vanishes, Theorem~5.8 in~\cite{ALPUs} goes on to prove that the automorphism is in fact well approximated by a finite depth quantum circuit. The construction there is of a very different nature than ours and the successive layers of the circuit have increasing interaction range but decreasing strength and they act over a decreasing time interval.

\section{Setting} \label{sec:Setting}

Let $(\Gamma,\textbf{d})$ be a metric graph, where $\textbf{d}$ is the graph distance. We assume that $\Gamma$ is $d$-dimensional in the sense that $\sup_{x\in\Gamma}\vert \{y\in\Gamma: \textbf{d}(x,y) = r\}\vert =\omega (1+r)^{d-1}$. For any subset $X\subset\Gamma$, we define for any $r>0$ the set $X^{(r)} = \{x\in\Gamma:\textbf{d}(x,X)\leq r\}$ which is an $r$-fattening of the set $X$.

To each site $x \in \Gamma$ we associate a finite dimensional complex Hilbert space $\caH_x$ and define for any finite $\Lambda \subset \Gamma$, 
\begin{align*}
	\caH_\Lambda := \bigotimes_{x \in \Lambda} \caH_x \, \quad \textrm{ and } \quad \caA_\Lambda := \caB(\caH_\Lambda) \,,
\end{align*}
where $\caB(\caH_x)$ denotes the bounded linear operators over $\caH_x$. Moreover, we identify $A \in \caA_{\Lambda_0}$ with $A \otimes \id_{\Lambda\setminus\Lambda_0} \in \caA_\Lambda$ whenever $\Lambda_0\subset\Lambda$. 
With this we can inductively define the algebra of local observables 
\begin{align*}
	\caA_{\rm loc} := \bigcup_{\Lambda \in \caP_{\rm fin}(\Gamma)} \caA_{\Lambda}\,,
\end{align*}
where the union is taken over $\caP_{\rm fin}(\Gamma)$, the set of all finite subsets of $\Gamma$. If $O\in\caA_{\rm loc}$, then $\mathrm{supp}(O)$ is the smallest set $X$ such that $O\in\caA_X$. The completion of $\caA_{\rm loc}$ with respect to the norm topology is a $C^*$-algebra which is called the \textit{quasi-local algebra} and we denote it by $\caA$. The above construction of $\caA$ is completely standard and we refer to~\cite{BratRob2,BratRob1} for further details.

\subsection{Interactions and Hamiltonians}
\begin{defn}
An \textit{interaction} is a map $\Phi : \caP_{\rm fin}(\Gamma) \to \caA_{\rm loc}$ such that 
$$
\Phi(X)\in\caA_X,\qquad	\Phi(X) = \Phi(X)^*,
$$ 
for all $X \in \caP_{\rm fin}(\Gamma)$. 
\end{defn}

We turn the set of interactions into a Banach space in the following way.
Let $0 < p \leq 1$ and let
$$
	\xi_{b} : [0,\infty) \to (0,\infty)\,, \qquad \xi_{b}(x) = e^{-bx^p}\,,
$$ 
for any $b > 0$. The function $\xi_{b}$ is a decreasing, logarithmically superadditive function, namely $\xi_b(x+y)\geq \xi_b(x)\xi_b(y)$, that is summable in the following sense	
\begin{equation}\label{xi is integrable}
\|\xi_{b}\|_1 := \sup_{y \in \Gamma} \sum_{x \in \Gamma} \xi_{b}(\textbf{d}(x,y)) < \infty,
\end{equation}
since $\Gamma$ is finite dimensional.
\begin{defn}\label{def:int norm}
 Let $D(X):= \max\{\textbf{d}(x,y)\,:\, x, y \in X\}$ denote the diameter of the set $X\subset \Gamma$.  The \textit{interaction norm} of an interaction $\Phi$ is given by 
\begin{align}\label{eq:StandardNormInteraction}
	\Inorm\Phi\Inorm_b := \sup_{x \in \Gamma} \sum_{\substack{X \in \caP_{\rm fin}(\Gamma): \\ x \in X}} \frac{\|\Phi(X)\|}{\xi_{b}(D(X))}.
\end{align}
For fixed $b>0$ we denote the Banach space of interactions with finite $\Inorm\cdot\Inorm_b$-norm
by $\caB_b$ and set 
\begin{align*}
	\caB := \bigcup_{b >0} \caB_b\,.
\end{align*}
\end{defn}
An interaction $\Phi$ therefore belongs to $\caB$ if it belongs to at least one $\caB_b$. If $\Phi\in\caB_{b_0}$, it then follows by definition that $\Phi\in\caB_b$ for all $0<b\leq b_0$. Note that while each $\caB_b$ is a Banach space, their union $\caB$ is not.

Finally, we shall denote $\caB_\infty = \bigcap_{b>0} \caB_b$, namely $\Phi\in\caB_\infty$ if and only if $\Phi\in\caB_b$ for all $b>0$.

We point out the norm $\Inorm \cdot \Inorm_b$ indicates both the rate of decay of the interaction as well as its intensity in the sense that the total interaction at any given $x \in \Gamma$ is bounded by the interaction norm:
\begin{equation*}
	\sup_{x\in\Gamma}\Big\|\sum_{\substack{X \in \caP_{\rm fin}(\Gamma): \\ x \in X}} \Phi(X) \Big\|
	\leq \Inorm \Phi\Inorm_b\,.	
\end{equation*}
We will also need the notion of an interaction that is \textit{almost localized} in some set $Z \subset \Gamma$.
\begin{defn}
Let $Z \subset \Gamma$ and let $D_{Z}(X) := D(X) + \textbf{d}(X,Z)$. Let
\begin{align}\label{eq:RefinedNormInteraction}
	\Inorm\Phi\Inorm_{b,Z} := \sup_{x \in \Gamma} \sum_{\substack{X \in \caP_{\rm fin}(\Gamma): \\ x \in X}} \frac{\|\Phi(X)\|}{\xi_b(D_Z(X))}.
\end{align}
We denote $\caB_b(Z)$ the corresponding Banach space of interactions and define 
\begin{align*}
	\caB(Z) := \bigcup_{b > 0} \caB_{b}(Z) \,.
\end{align*} 
\end{defn}
Note that the replacement of the diameter by $D_{Z}(X)$ captures the decay of the interaction $\Phi(X)$ in the size of $X$ and in the distance from $X$ to $Z$. If $\Phi \in \caB_b(Z)$ then the total contribution of $\Phi$ at a given point $x$ is not only finite, but it decays with the distance of $x$ to $Z$,
\begin{align}\label{eq:TotalIntAtxFiniteAndDecays}
	\Big\|\sum_{\substack{X \in \caP_{\rm fin}(\Gamma): \\ x \in X}} \Phi(X) \Big\| \leq \Inorm \Phi\Inorm_{b,Z} \,\xi_b(\textbf{d}(x,Z))\,.
\end{align}
Indeed, we first write
\begin{equation}\label{usual}
\Big\Vert \sum_{\substack{X\in\caP_{\mathrm{fin}}(\Gamma):\\x\in X}}\Phi(X)\Big\Vert
\leq \sum_{\substack{X\in\caP_{\mathrm{fin}}(\Gamma):\\x\in X}} \frac{\Vert \Phi(X)\Vert }{\xi_b(D_Z(X))} \xi_b(D_Z(X)).
\end{equation}
Given $x\in X$, let $z,z_0\in Z$ and $x_0\in X$ be such that $\textbf{d}(x,Z) = \textbf{d}(x,z)$, $\textbf{d}(X,Z) = \textbf{d}(x_0,z_0)$. Then,
\begin{equation}\label{dxZ vx dXZ}
\textbf{d}(x,Z) = \textbf{d}(x,z) \leq \textbf{d}(x,z_0) \leq \textbf{d}(x,x_0) + \textbf{d}(x_0,z_0) \leq D_Z(X).
\end{equation}
Since $\xi_b$ is a decreasing function, we have that $\xi_b(D_Z(X)) \leq \xi_b(\textbf{d}(x,Z))$ which yields~(\ref{eq:TotalIntAtxFiniteAndDecays}) when plugged in~(\ref{usual}).

In general, the sum of an interaction is not convergent in $\caA$, but such a \emph{local Hamiltonian} defines a dynamics and a densely defined *-derivation on $\caA$. Moreover, if an interaction is almost localized in a finite set, then the sum is convergent and defines what we shall call an almost local observable in $\caA$. 

We start with the latter.
\begin{lemma}\label{lemma: finite Z interaction}
Let $Z\in\caP_{\mathrm{fin}}(\Gamma)$ and $\Phi\in\caB(Z)$. 
The sum
\begin{equation}\label{G of Phi}
G^\Phi := \sum_{X\in\caP_{\mathrm{fin}}(\Gamma)}\Phi(X)
\end{equation}
is convergent in $\caA$. Moreover,
\begin{equation*}
\Vert G^\Phi\Vert \leq C_b \Inorm \Phi \Inorm_{b,Z}\vert Z\vert 
\end{equation*}
for all $b>0$, where the right hand side is infinite whenever $\Phi\notin\caB_b(Z)$.
\end{lemma}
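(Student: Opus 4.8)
The plan is to reduce the statement to \emph{absolute summability} of the family $\{\Phi(X)\}_{X\in\caP_{\mathrm{fin}}(\Gamma)}$ and then invoke completeness of the quasi-local algebra. First I would observe that if $\sum_{X\in\caP_{\mathrm{fin}}(\Gamma)}\norm{\Phi(X)}<\infty$, then the net of finite partial sums $\sum_{X\in\caF}\Phi(X)$, indexed by finite families $\caF$ of finite subsets of $\Gamma$, is Cauchy in the $C^*$-norm: for $\caF\subset\caF'$ one has $\norm{\sum_{X\in\caF'}\Phi(X)-\sum_{X\in\caF}\Phi(X)}\le\sum_{X\in\caF'\setminus\caF}\norm{\Phi(X)}$, which is a tail of a convergent sum. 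Since $\caA$ is complete this net converges, which is exactly the meaning of ``$G^\Phi$ is convergent in $\caA$'', and the triangle inequality gives $\norm{G^\Phi}\le\sum_X\norm{\Phi(X)}$. So the whole statement reduces to the estimate $\sum_X\norm{\Phi(X)}\le C\,\Inorm\Phi\Inorm_{b,Z}\abs{Z}$.

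To bound this unordered sum I would anchor each set at one of its sites: since every nonempty $X$ contains at least one point,
\begin{equation*}
\sum_{X\in\caP_{\mathrm{fin}}(\Gamma)}\norm{\Phi(X)}\le\sum_{x\in\Gamma}\sum_{\substack{X\in\caP_{\mathrm{fin}}(\Gamma):\\ x\in X}}\norm{\Phi(X)},
\end{equation*}
which overcounts each term $\abs{X}$ times but is harmless for an upper bound. For the inner sum I would simply rerun the computation that already precedes~(\ref{eq:TotalIntAtxFiniteAndDecays}), keeping the operator norms rather than pulling the sum inside: inserting $1=\xi_b(D_Z(X))/\xi_b(D_Z(X))$, using~(\ref{dxZ vx dXZ}) together with monotonicity of $\xi_b$ to replace $\xi_b(D_Z(X))$ by the larger $\xi_b(\textbf{d}(x,Z))$, and then the very definition of $\Inorm\cdot\Inorm_{b,Z}$ as a supremum over $x$, yields
\begin{equation*}
\sum_{\substack{X\in\caP_{\mathrm{fin}}(\Gamma):\\ x\in X}}\norm{\Phi(X)}\le\xi_b(\textbf{d}(x,Z))\sum_{\substack{X\in\caP_{\mathrm{fin}}(\Gamma):\\ x\in X}}\frac{\norm{\Phi(X)}}{\xi_b(D_Z(X))}\le\Inorm\Phi\Inorm_{b,Z}\,\xi_b(\textbf{d}(x,Z)).
\end{equation*}

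It then remains to sum the geometric weight $\xi_b(\textbf{d}(x,Z))$ over $x\in\Gamma$. Because $Z$ is finite, $\textbf{d}(x,Z)=\min_{z\in Z}\textbf{d}(x,z)$, so by monotonicity $\xi_b(\textbf{d}(x,Z))=\max_{z\in Z}\xi_b(\textbf{d}(x,z))\le\sum_{z\in Z}\xi_b(\textbf{d}(x,z))$; summing over $x$ and using~(\ref{xi is integrable}) gives $\sum_{x\in\Gamma}\xi_b(\textbf{d}(x,Z))\le\abs{Z}\,\norm{\xi_b}_1$. Chaining the three estimates produces the claim with $C=\norm{\xi_b}_1$, a constant depending only on the graph $\Gamma$ and on $b$. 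I do not expect a serious obstacle: the content is entirely in the spatial decay encoded by $\caB_b(Z)$, and the estimates are essentially those already used for~(\ref{eq:TotalIntAtxFiniteAndDecays}); the single point that merits a line of care is that the sum~(\ref{G of Phi}) is over the unordered index set $\caP_{\mathrm{fin}}(\Gamma)$, so that convergence must be taken in the net sense and is supplied by absolute summability plus completeness of $\caA$.
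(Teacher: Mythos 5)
Your argument is correct and follows essentially the same route as the paper: anchor each $X$ at one of its sites, pull out $\xi_b(D_Z(X))$, use~(\ref{dxZ vx dXZ}) to compare with $\xi_b(\textbf{d}(x,Z))$, and then sum the spatial weight over $\Gamma$ to pick up the factor $|Z|$. The only cosmetic differences are that the paper phrases absolute summability via the exhaustion $\Lambda\nearrow\Gamma$ rather than the net of all finite partial sums, and bounds $\sum_{x\in\Gamma}\xi_b(\textbf{d}(x,Z))$ by a level-set decomposition instead of your direct $\xi_b(\textbf{d}(x,Z))\le\sum_{z\in Z}\xi_b(\textbf{d}(x,z))$, which gives the same constant up to the identification $\|\xi_b\|_1\le\kappa\sum_{n\ge 0}(1+n)^{d-1}\xi_b(n)$.
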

\begin{proof}
Let $b>0$ be such that $\Phi\in\caB_b(Z)$. Then for any $\Lambda\in \caP_{\mathrm{fin}}(\Gamma)$,
\begin{equation*}
\sum_{\substack{X\in\caP_{\mathrm{fin}}(\Gamma):\\X\cap\Lambda^c \neq \varnothing}}\Vert \Phi(X) \Vert \leq \sum_{x\in\Lambda^c}\sum_{X\ni x }\Vert \Phi(X) \Vert 
\end{equation*}
and we conclude by~(\ref{eq:TotalIntAtxFiniteAndDecays}) that
\begin{equation*}
\Big \Vert \sum_{\substack{X\in\caP_{\mathrm{fin}}(\Gamma):\\X\cap\Lambda^c \neq \varnothing}} \Phi(X) \Big \Vert\leq \Inorm \Phi \Inorm_{b,Z} \sum_{x\in\Lambda^c} \xi_b(\textbf{d}(x,Z)).
\end{equation*}
The integrability~(\ref{xi is integrable}) of $\xi_b$ and the finiteness of $Z$ imply that $\lim_{\Lambda\nearrow\Gamma}\sum_{x\in\Lambda^c} \xi_b(\textbf{d}(x,Z)) = 0$. The second claim follows from
\begin{equation*}
\Vert G^\Phi\Vert
\leq \sum_{x\in\Gamma}\sum_{\substack{X\in\caP_{\mathrm{fin}}(\Gamma):\\x\in X}}\frac{\Vert \Phi(X)\Vert}{\xi_b(D_Z(X))}\xi_b(D_Z(X))
\leq \Inorm \Phi \Inorm_{b,Z} \sum_{x\in\Gamma} \xi_b(\textbf{d}(x,Z))
\end{equation*}
where we used~(\ref{dxZ vx dXZ}). We decompose $\sum_{x\in\Gamma} = \sum_{n=0}^\infty\sum_{x:\textbf{d}(x,Z) = n}$ to finally get the bound
\begin{equation*}
\Vert G^\Phi\Vert
\leq\Inorm \Phi \Inorm_{b,Z}\sum_{n=0}^\infty \sum_{z\in Z}\sum_{x:\textbf{d}(x,z) = n}  \xi_b(n)
\leq C_b \Inorm \Phi \Inorm_{b,Z} \vert Z\vert
\end{equation*}
where $C_b = \omega\sum_{n=0}^\infty(1+n)^{d-1}\xi_b(n)$ is convergent.
\end{proof}
\begin{defn}\label{def:AlmostLocalObservable}
An \emph{almost local observable} is an element $O\in\caA$ for which there exists $Z\in\caP_{\mathrm{fin}}(\Gamma)$ and a $\Phi\in\caB(Z)$ such that $O = G^\Phi$. 
We denote the set of almost local observables by $\caL(Z)$, respectively $\caL_b(Z)$ whenever the rate $b$ is fixed.
\end{defn}

\noindent 
Slightly abusing language, we shall call $Z$ the almost support of $O\in \caL(Z)$. Moreover, we say that the interaction $\Phi$ in the definition is an interaction associated with $O$.

Let us now turn to interactions that are supported in the whole set $\Gamma$. They are locally finite, but the fact that they are extensive implies that a sum as in~(\ref{G of Phi}) is divergent. This suggests the following definition.
\begin{defn}\label{def:Hamiltonian}
A family of self-adjoint operators $H = \{H_\Lambda \, : \, \Lambda \in \caP_{\rm fin}(\Gamma)\}$ with $\supp(H_\Lambda) = \Lambda$ is a \textit{local Hamiltonian} if there exists an interaction $\Phi \in \caB$ such that 
\begin{align*}
	H_\Lambda = \sum_{X \subset \Lambda} \Phi(X)\,.
\end{align*}
We denote the set of local Hamiltonians by $\caL$.
\end{defn}
Let $H$ be a local Hamiltonian associated with an interaction $\Phi\in\caB_b$. Let $(\Lambda_n)_{n\in\bbN}$ be an increasing and absorbing sequence of finite sets. For an observable $O\in\caA_Z$, we have that if $n>m$
\begin{align*}
 \Vert [H_{\Lambda_n} - H_{\Lambda_m}, O] \Vert
&\leq  \sum_{\substack{X\subset\Lambda_n: \\ X\cap(\Lambda_m^c\cap Z)\neq\varnothing}}2\Vert O\Vert \Vert \Phi(X)\Vert \\
&\leq 2\Vert O\Vert \sum_{x\in Z}\sum_{\substack{X\ni x: \\ X\cap\Lambda_m^c\neq\varnothing}}
\frac{\Vert \Phi(X)\Vert}{\xi_b(D_Z(X))}\xi_b(D_Z(X)) \\
&\leq 2\Vert O\Vert \vert Z\vert\Inorm \Phi\Inorm_b \xi_b(\textbf{d}(Z,\Lambda_m^c))
\end{align*}
since $D_Z(X) = D(X) \geq \textbf{d}(Z,\Lambda_m^c)$. Hence $(i [H_{\Lambda_n} , O ])_{n\in\bbN}$ is a Cauchy sequence and $\lim_{n\to\infty}[H_{\Lambda_n},O]$ exists in $\caA$.

\subsection{Strongly continuous dynamics and derivations}
Let $H\in\caL$ with interaction $\Phi\in\caB$. The finite volume dynamics $\bbR\ni t\mapsto\ep{itH_\Lambda}O \ep{-itH_\Lambda}$ satisfies a Lieb-Robinson bound. 
While the proof runs along the general lines of~\cite{NacOgaSim06}, we reproduce it in the Appendix~\ref{app:LR-Bounds} in the specific setting of this paper; see also Section~4 in~\cite{NacSimYou19-1}. By standard arguments (see the previously cited reference or the original \cite{lieb1972finite}) the Lieb-Robinson bound implies that
\begin{align*}
	\tau_t^{\Phi}(O) = \lim_{\Lambda \nearrow \Gamma} \ep{it H_\Lambda} O \ep{-it H_\Lambda}	
\end{align*}
exists for all $O\in\caA_{\rm loc}$ and that it extents to a strongly continuous group of $^*$-automorphisms of $\caA$. 
The corresponding generator $\delta^\Phi$ of the dynamics $\tau_t^{\Phi}$ is given by 
\begin{align*}
	\frac{d}{dt} \tau_t^\Phi(O) = \tau_t^\Phi(\delta^\Phi(O))\,.
\end{align*}
A core for $\delta^\Phi$ is the local algebra $\caA_{\rm loc}$, see e.g.~Proposition~6.2.3 in~\cite{BratRob2} where $\delta^\Phi$ is explicitly given as the limit 
\begin{equation}\label{delta on Aloc}
\delta^\Phi(O) = \lim_{\Lambda\to\Gamma}i[H_\Lambda,O] = \sum_{X\in\caP_{\mathrm{fin}}(\Gamma)}i[\Phi(X),O],
\end{equation}
and the sum is convergent by the remark immediately after Definition~\ref{def:Hamiltonian}.

For a general Hamiltonian in $\caL$, neither $\tau_t^\Phi(O)$ nor $\delta^\Phi(O)$ is strictly local even if $O\in\caA_{\mathrm{loc}}$. However, we shall prove later that for any finite $Z$, $\caL(Z)$ belongs to the domain of $\delta^\Phi$ and it is invariant under the action of the derivation. Similarly, the Lieb-Robinson bound implies that $\caL(Z)$ is invariant under the action of~$\tau_t^\Phi$.

Let $r\in\bbN_0$. For any local observable $O \in \caA_Z$ we write
\begin{align}\label{eq:decompDynamics}
	\tau_t^\Phi(O) = \bbE_{Z^{(r)}}(\tau_t^\Phi(O)) + \sum_{n=r}^\infty \left(\bbE_{Z^{(n+1)}}(\tau_t^\Phi(O)) - \bbE_{Z^{(n)}}(\tau_t^\Phi(O))\right)
\end{align}
where  $\bbE_{X}$ is the projection onto the subalgebra $\caA_{X}$. Now, if $\Phi \in \caB_{b'}$, then the Lieb-Robinson bound for $\tau_t^\Phi$ implies that, for any $b''<b'$ and $n \in \bbN$,
\begin{align}\label{eq:AppOfLiebRobBound}
	\|\left(\bbE_{Z^{(n)}} - \textrm{id}\right)(\tau_t^\Phi(O))\|\leq  \frac{2 \|O\| |Z|}{M_{b'-b''}} \ep{\kappa(b',b'')|t|}\xi_{b''}(n)\,,
\end{align} 
where 
\begin{equation}\label{m}
M_\epsilon := \sup\{\vert X\vert\xi_{\epsilon}(D(X)):X\in\caP_{\mathrm{fin}}(\Gamma)\}
\end{equation}
is finite for any $\epsilon>0$, and the constant in the exponent is given by
\begin{equation}\label{DefKappa}
\kappa(b',b'') := 2\Inorm \Phi\Inorm_{b'} M_{b'-b''}.
\end{equation}

These estimates now yield the following proposition.
\begin{prop}\label{prop:DynamicsIsAlmostLocalObservable}
Let $Z\subset\Gamma$ and $O\in\caL_b(Z)$ with associated interaction~$\Psi$. Let $\Phi \in \caB_{b'}$ and let $\tau_t^\Phi$ be the corresponding dynamics. Then $\tau^\Phi_t(O)\in\caL_{b''}\left(Z\right)$ for any $b''<\min\{b,2^{-p} b'\}$. There is an interaction $\tau^\Phi_t(\Psi)$ associated with $\tau^\Phi_t(O)$ such that
\begin{equation*}
\Inorm \tau^\Phi_t(\Psi)\Inorm_{b'',Z} \leq C(b,b',b'') \ep{\kappa(b',\frac{1}{2}(b''+2^{-p}b')) \vert t\vert}  \Inorm \Psi\Inorm_{b,Z},
\end{equation*}
where the constant depends only on $b,b',b''$.
\end{prop}
\begin{proof}
Let $\Psi\in\caB_{b}(Z)$ be the interaction associated with $O$, namely $O = \sum_{X\in\caP_{\mathrm{fin}}(\Gamma)}\Psi(X)$. We construct an interaction, denoted $\tau_t^\Phi(\Psi)$, such that $\tau_t^\Phi(O) = \sum_{X\in\caP_{\mathrm{fin}}(\Gamma)}\tau_t^\Phi(\Psi)(X)$ as follows. We decompose each $\tau_t^\Phi(\Psi(Y))$ according to~(\ref{eq:decompDynamics}) with $r=0$ and gather contributions supported in a given set $X$ to get
\begin{equation}\label{LocalDecomp}
\tau_t^\Phi(\Psi)(X) := \bbE_{X}(\tau_t^\Phi(\Psi(X))) +
\sum_{n=1}^\infty\sum_{\substack{Y\in\caP_{\mathrm{fin}}(\Gamma): \\ X=Y^{(n)}}}\left(\bbE_{Y^{(n)}}-\bbE_{Y^{(n-1)}}\right)(\tau_t^\Phi(\Psi(Y))).
\end{equation}
Using~(\ref{eq:AppOfLiebRobBound}), all terms in the sum are bounded above by
\begin{equation}\label{E-E}
\left\Vert (\bbE_{Y^{(n)}}-\bbE_{Y^{(n-1)}})(\tau_t^\Phi(\Psi(Y)))\right\Vert
\leq \frac{2\Vert \Psi(Y)\Vert \vert Y^{(n)}\vert}{M_{b'-\tilde b}}\ep{ \kappa(b',\tilde b) |t|}\xi_{\tilde b}(n),
\end{equation}
for any $\tilde b<b'$. The first term is bounded above by $\Vert \Psi(X)\Vert$ since $\bbE_{X}$ is a projection; for simplicity, we shall rather use the estimate~(\ref{E-E}) with $n=0$ in the following. 

Let $x\in\Gamma$. We claim that $\sum_{\substack{X \in \caP_{\rm fin}(\Gamma) : \\ x\in X}}\frac{\Vert \tau_t^\Phi(\Psi)(X)\Vert }{\xi_{b''}(D_{Z}(X))}$ is uniformly bounded. We shall use~(\ref{LocalDecomp}) together with the estimate~(\ref{E-E}). If $x\in X= Y^{(n)}$, then in particular $B_n(x)\cap Y\neq\varnothing$ and so
\begin{multline*}
\sum_{\substack{X \in \caP_{\rm fin}(\Gamma) : \\ x\in X}} \frac{\Vert \tau_t^\Phi(\Psi)(X)\Vert }{\xi_{b''}(D_{Z}(X))} \\
\leq \frac{2\omega}{M_{b'-\tilde b}}\sum_{n=0}^\infty \sum_{y\in B_{n}(x)}\sum_{\substack{Y \in \caP_{\rm fin}(\Gamma) : \\ y\in Y}}
\frac{ \Vert \Psi(Y)\Vert}{\xi_{b}(D_Z(Y))}\vert Y\vert \xi_{b-b''}(D_Z(Y))\frac{\xi_{\tilde b}(n)}{\xi_{b''}(2n)}(1+n)^d\ep{\kappa(b',\tilde b)\vert t\vert}.
\end{multline*}
In this estimate, we firstly recalled that $X = Y^{(n)}$ and used $D_Z(Y^{(n)})\leq D_Z(Y) + 2n$ to conclude that $\xi_{b''}(D_{Z}(X))\geq \xi_{b''}(D_{Z}(Y))\xi_{b''}(2n)$, secondly factorized $\frac{1}{\xi_{b''}} = \frac{\xi_{b-b''}}{\xi_{b}}$. We also used that $\vert Y^{(n)}\vert\leq \omega\vert Y\vert (1+n)^d$. Hence,
\begin{equation*}
\sup_{x\in\Gamma}\sum_{\substack{X \in \caP_{\rm fin}(\Gamma) : \\ x\in X}}\frac{\Vert \tau_t^\Phi(\Psi)(X)\Vert }{\xi_{b''}(D_{Z}(X))}\leq S(\tilde b,b'')\frac{2\omega M_{b-b''}}{M_{b'-\tilde b}}\Inorm\Psi\Inorm_{b,Z}\ep{\kappa(b',\tilde b)\vert t\vert}
\end{equation*}
where we used that $S(\tilde b,b'') = \sum_{n=0}^\infty\frac{\xi_{\tilde b}(n)}{\xi_{b''}(2n)}(1+n)^{2d}$  is finite because we can pick $\tilde b$ such that $b''<2^{-p}\tilde b$ since $b''<2^{-p}b'$ to ensure the convergence of the series. For simplicity, we make the specific choice $\tilde b = \frac{1}{2}(b'' + 2^{-p}b')$ and let $C(b,b',b'') = 2\omega S(\tilde b,b'') \frac{ M_{b-b''}}{M_{b'-\tilde b}}$.
\end{proof}
\begin{rem}
(i) A less detailed but clearer way to formulate the result would be that the *-subalgebra of almost local observables supported in $Y$ is an invariant subspace for $\tau_t^\Phi$ for $t$ in a compact interval.\\
(ii) One could wish to take the propagation into account in this result and prove rather that $\tau_t^\Phi(\caL(Z)) \subset \caL(Z^{(v\vert t\vert)})$, at least in the case of an interaction $\Phi$ that decays exponentially. This is of course true as $\caL(Z^{(v\vert t\vert)})$ and $\caL(Z)$ are equal as sets, but equipped with different norms. Since however the bound would still be superpolynomially large in time (because the support of each individual interaction term grows with time and hence the decay rate of the interaction does worsen), there is no real gain in doing so.
\end{rem}

The derivation $\delta^\Phi$ associated with a local interaction is in general unbounded on $\caA$ and accordingly not everywhere defined. As pointed out earlier, $\caA_{\mathrm{loc}}$ is a core on which it is given explicitly as the limit of a commutator. We prove that $\delta^\Phi$ extends to the set of almost local observables and that, as for the automorphism $\tau^\Phi_t$, the sets $\caL(Y)$ are invariant under the action of $\delta^\Phi$. 

Instead of considering $\delta^\Phi$ as an unbounded operator on observables, we find it more convenient to define it on the set of interactions $\caB$ and to show that it extends to a bounded linear operator $\caB_b\to\caB_{b'}$ for appropriate pairs $(b,b')$. A similar approach was in fact already taken in~\cite{NacSimYou19-1}.
\begin{defn}\label{def:delta of psi}
Let $\Phi,\Psi\in\caB$. The interaction $\delta^\Phi(\Psi)$ is defined by
\begin{equation}\label{eq:DerivationForTwoInteractions}
	\delta^\Phi(\Psi)(X) := \sum_{\substack{Y,Y' \in \caP_{\rm fin}(\Gamma): \\ Y \cap Y' \ne \varnothing, \, Y\cup Y' = X}} i [\Phi(Y),\Psi(Y')]
\end{equation}
for any $X\in \caP_{\rm fin}(\Gamma)$. 
\end{defn}
\noindent Note that the condition $Y \cap Y' \ne \varnothing$ is only for clarity since the commutator vanishes if it is not satisfied.
\begin{rem}
If $O\in\caA_Z$ and $\Psi$ is the interaction trivially associated with it, namely $\Psi(Z) = O$  and $\Psi(X) = 0$ otherwise, then the definition above yields an interaction such that 
\begin{equation*}
\sum_{X\in \caP_{\rm fin}(\Gamma)}\delta^\Phi(\Psi)(X)
=\sum_{Y\in \caP_{\rm fin}(\Gamma)} i [\Phi(Y),O] = \delta^\Phi(O)
\end{equation*}
as in~(\ref{delta on Aloc}), justifying the notation $\delta^\Phi$. 
\end{rem}
\begin{prop}\label{prop:ActionOfDeltas}
Let $Z\subset \Gamma$ and let $\Psi\in\caB_b(Z)$. Let $\Phi \in \caB_{b'}$. If $\delta^\Phi(\Psi)$ is defined as in~(\ref{eq:DerivationForTwoInteractions}), then $\delta^\Phi(\Psi)\in\caB_{b''}(Z)$ for any $b''<\min\{b,b'\}$ and 
\begin{equation}\label{eq:InteractionNormOfDerviationDelta}
\Inorm \delta^\Phi(\Psi)\Inorm_{b'',Z}\leq 4 M_{\min\{b,b'\}-b''}\Inorm \Phi\Inorm_{b'} \: \Inorm  \Psi \Inorm_{b,Z}.
\end{equation}
\end{prop}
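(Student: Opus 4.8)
The plan is to estimate the interaction norm $\Inorm \delta^\Phi(\Psi)\Inorm_{b'',Z}$ directly from the defining sum~(\ref{eq:DerivationForTwoInteractions}), bounding each commutator $\Vert i[\Phi(Y),\Psi(Y')]\Vert \leq 2\Vert \Phi(Y)\Vert\,\Vert\Psi(Y')\Vert$ and then reorganizing the triple sum. Fix $x\in\Gamma$. By definition,
\[
\sum_{\substack{X\in\caP_{\mathrm{fin}}(\Gamma):\\ x\in X}}\frac{\Vert \delta^\Phi(\Psi)(X)\Vert}{\xi_{b''}(D_Z(X))}
\leq 2\sum_{\substack{X\ni x}}\frac{1}{\xi_{b''}(D_Z(X))}
\sum_{\substack{Y,Y':\,Y\cap Y'\neq\varnothing\\ Y\cup Y'=X}}\Vert \Phi(Y)\Vert\,\Vert\Psi(Y')\Vert.
\]
Since $x\in X=Y\cup Y'$, either $x\in Y$ or $x\in Y'$; by symmetry of the bookkeeping (though $\Phi$ and $\Psi$ play asymmetric roles) one treats the two cases and in each is free to drop the constraint $Y\cup Y'=X$ once the denominator has been controlled. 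The first main step is therefore to replace the sum over $X$ by independent sums over $Y$ and $Y'$ subject only to $Y\cap Y'\neq\varnothing$, at the cost of the geometric factor hidden in $\xi_{b''}(D_Z(X))^{-1}$.

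The key geometric inequality is that, for $Y\cap Y'\neq\varnothing$ and $X=Y\cup Y'$, one has $D_Z(X)\leq D_Z(Y)+D(Y')$ and symmetrically $D_Z(X)\leq D(Y)+D_Z(Y')$ (using that the distance from $X$ to $Z$ is at most the distance from $Y$ to $Z$, plus a diameter term to travel within $X$; this is the same style of estimate as~(\ref{dxZ vx dXZ})). Using the logarithmic superadditivity $\xi_{b''}(a+c)\geq \xi_{b''}(a)\xi_{b''}(c)$, we get
\[
\frac{1}{\xi_{b''}(D_Z(X))}\leq \frac{1}{\xi_{b''}(D_Z(Y'))}\cdot\frac{1}{\xi_{b''}(D(Y))}
\]
in the case $x\in Y'$, and the mirror bound when $x\in Y$. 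Inserting this, the sum factorizes: the $\Psi$-part is controlled by $\Inorm\Psi\Inorm_{b,Z}$ after writing $\xi_{b''}(D_Z(Y'))^{-1}=\xi_b(D_Z(Y'))^{-1}\cdot\xi_{b-b''}(D_Z(Y'))$ (valid since $b''<b$), and the surplus $\xi_{b-b''}(D_Z(Y'))$ is bounded. The $\Phi$-part, namely $\sum_{Y\ni y}\Vert\Phi(Y)\Vert/\xi_{b'}(D(Y))$ times the surplus $\xi_{b'-b''}(D(Y))$, is controlled by $\Inorm\Phi\Inorm_{b'}$ for the same reason, $b''<b'$. The remaining step is to account for the coupling $Y\cap Y'\neq\varnothing$: one picks a common site $y\in Y\cap Y'$ and sums $\sum_{y}$ over that site. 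Crucially, the common site $y$ is within distance $D(Y')$ of $x$ (when $x\in Y'$), and the factor $\sum_{y:\mathbf d(x,y)=n}1\leq \kappa(1+n)^{d-1}$, combined with a surviving copy of $\xi_{b-b''}$ (or one splits the budget into three pieces $b''<b$ from the outset, reserving a sliver for this spatial sum), makes $\sum_n\kappa(1+n)^{d-1}\xi_{(b-b'')/2}(n)<\infty$ since $\Gamma$ is $d$-dimensional.

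The main obstacle is the combinatorial/geometric bookkeeping in the previous paragraph: keeping track of which of $Y,Y'$ contains the base point $x$, correctly distributing the decay budget (splitting $b''$ strictly below both $b$ and $b'$ into a part that reconstructs the interaction norms, a part that tames the diameter surpluses, and a part that absorbs the $(1+n)^{d-1}$ volume growth of the lattice when summing over the common site), and making sure the constant $C_\delta$ that emerges is a finite expression in $b,b',b''$ and $d$ only. Once the factorization is set up cleanly this is routine, producing
\[
\sup_{x\in\Gamma}\sum_{\substack{X\ni x}}\frac{\Vert \delta^\Phi(\Psi)(X)\Vert}{\xi_{b''}(D_Z(X))}\leq C_\delta\,\Inorm\Phi\Inorm_{b'}\,\Inorm\Psi\Inorm_{b,Z},
\]
which is exactly~(\ref{eq:InteractionNormOfDerviationDelta}) and shows in particular $\delta^\Phi(\Psi)\in\caB_{b''}(Z)$.
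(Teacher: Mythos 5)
Your proposal follows essentially the same route as the paper's proof: bound the commutator by $2\|\Phi(Y)\|\|\Psi(Y')\|$, split according to whether $x\in Y$ or $x\in Y'$, use $D(Y\cup Y')\leq D(Y)+D(Y')$ (valid because $Y\cap Y'\neq\varnothing$) together with the monotonicity and logarithmic superadditivity of $\xi$ to factor the denominator, extract $\Inorm\Phi\Inorm_{b'}$ and $\Inorm\Psi\Inorm_{b,Z}$, and absorb the sum over a common site $y$ via the $d$-dimensional volume growth. The only cosmetic difference is in the final bookkeeping: the paper packages the coupling into $\sup_{Y}\{|Y|\,\xi_{\min\{b,b'\}-b''}(D(Y))\}$ rather than an explicit $\sum_n\kappa(1+n)^{d-1}\xi(n)$, but these are the same estimate.
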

\begin{proof}
For $x\in\Gamma$, we wish to estimate
\begin{equation}\label{eq:FirstEstimateDerivationNorm}
	\sum_{\substack{X \in \caP_{\rm fin}(\Gamma): \\ x \in X}} 
	 \sum_{\substack{Y,Y' \in \caP_{\rm fin}(\Gamma): \\ Y \cap Y' \ne \varnothing, \, Y\cup Y' = X}} \frac{2\|\Phi(Y)\| \|\Psi(Y')\|}{\xi_{b''}(D_Z(X))}.
\end{equation}
There are two possibilities for the second sum, either $x\in Y$  or $x\in Y'\setminus Y$. In the first case we can bound the sum by
\begin{align*}
\sum_{\substack{Y \in \caP_{\rm fin}(\Gamma): \\ x \in Y}} \frac{ 2 \|\Phi(Y)\|}{\xi_{b'}(D(Y))}& \sum_{y \in Y} \sum_{\substack{Y' \in \caP_{\rm fin}(\Gamma): \\ y \in Y'}} \frac{\|\Psi(Y')\|}{\xi_{b}(D_Z(Y'))} \frac{\xi_{b'}(D(Y)) \xi_{b}(D_Z(Y'))}{\xi_{b''}(D_Z(Y\cup Y'))}
\end{align*}
and in the second case the bound is similar. Now $\textbf{d}(Z,Y\cup Y')\leq \textbf{d}(Z,Y')$. What is more, since $Y,Y'$ are not disjoint, $D(Y\cup Y') \leq D(Y) + D(Y')$ so that monotonicity and superadditivity yield
\begin{align*}
\frac{\xi_{b'}(D(Y)) \xi_{b}(D_Z(Y'))}{\xi_{b''}(D_Z(Y\cup Y'))}
&\leq \frac{\xi_{\min\{b,b'\}}(D(Y) + D(Y') + \textbf{d}(Z,Y'))}{\xi_{b''}(D(Y) + D(Y') + \textbf{d}(Z,Y\cup Y'))}\\
&\leq \xi_{\min\{b,b'\}-b''}(D(Y) + D_Z(Y'))
\end{align*}
Since $b''<\min\{b,b'\}$, we conclude that
\begin{align*}
		\Inorm\delta^{\Phi}(\Psi)\Inorm_{b'',Z} 
		\leq C \Inorm \Phi\Inorm_{b'} \Inorm \Psi\Inorm_{b,Z}\,,
\end{align*}
where $C = 4\sup_{Y,Y' \in \caP_{\rm fin}(\Gamma) } \left\{|Y|  \xi_{\min\{b,b'\}-b''}(D(Y)+ D_Z(Y'))\right\}\leq 4M_{\min\{b,b'\}-b''}$, as announced.
\end{proof}
\begin{rem}
We note that this is valid for any set $Z$, not necessarily finite. If $Z\in \caP_{\rm fin}(\Gamma)$, then by the proposition both $\Psi$ and $\delta^\Phi(\Psi)$ correspond to almost local observables $G^\Psi$ and $G^{\delta^\Phi(\Psi)}$ in $\caL(Z)$ and the map $G^\Psi \mapsto \delta^\Phi(G^\Psi):= G^{\delta^\Phi(\Psi)}$ provides the announced extension of $\delta^\Phi$ from $\caA_{\mathrm{loc}}$ to the set of almost local observables in $\caA$. Moreover, the proposition shows that if $\Phi\in\caB_{b'}$, then the  map $\delta^\Phi$ is well-defined for any interaction in $\caB$ and that it is a bounded linear operator $\caB_b(Z)\to\caB_{b''}(Z)$ for any $b''<\min\{b,b'\}$ and any subset $Z$. The upper bound on $\Vert \delta^\Phi\Vert_{\caL(\caB_{b}(Z),\caB_{b''}(Z))}$ provided by the proof diverges as $b''\to\min\{b,b'\}$, but it can be taken to be uniform in $Z$.
\end{rem}

We conclude this section with a joint corollary of Proposition~\ref{prop:DynamicsIsAlmostLocalObservable} \& \ref{prop:ActionOfDeltas}. For any $Z \in \caP_{\rm fin}(\Gamma)$, the set $\caL(Z)$ of almost local observables is invariant under the action of $\tau_t^\Phi$ and $\delta^\Phi$ for any $t$ in a compact interval. It follows in particular that Duhamel's formula and its iterates to arbitrary order are well-defined.

\begin{cor}
Let $H\in\caL$ with interaction $\Phi\in\caB$.
The function $t\mapsto \tau_t^\Phi$ is infinitely often strongly differentiable on the algebra of almost local observables. 
In particular if $O\in\caL(Z)$ for some $Z\in\caP_{\mathrm{fin}}(\Gamma)$, then Duhamel's formula 
\begin{equation}
\tau^\Phi_t(O) 
= O + \sum_{j=1}^{n-1} \frac{t^j}{j!}\left(\delta^\Phi\right)^j(O)
+ \int_{\Sigma^n_t}\tau^\Phi_{s_n}\left((\delta^\Phi)^n(O)\right)d^n\!s \label{Duahmel}
\end{equation}
is well-defined for any $n\in\bbN$. We denoted $\Sigma^n_t :=\{0\leq s_1\leq \ldots \leq s_n\leq t\}$ and $d^n\!s = ds_n\ldots ds_1$.
\end{cor}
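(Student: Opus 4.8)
My plan is to reduce the corollary to the standard theory of strongly continuous one-parameter groups. Write $\delta^\Phi$ for the infinitesimal generator of $\tau^\Phi$, a closed densely defined operator on $\caA$ with core $\caA_{\mathrm{loc}}$. It then suffices to prove that every almost local observable lies in $\bigcap_{n\in\bbN}\mathrm{Dom}((\delta^\Phi)^n)$: once that is known, $t\mapsto\tau_t^\Phi(O)$ is $C^n$ for every $n$ with $\frac{d^k}{dt^k}\tau_t^\Phi(O)=\tau_t^\Phi((\delta^\Phi)^k(O))=(\delta^\Phi)^k(\tau_t^\Phi(O))$, and the right-hand side is norm-continuous in $t$; formula~(\ref{Duahmel}) is then obtained by iterating $n$ times the first-order identity $\tau_t^\Phi(O)=O+\int_0^t\tau_s^\Phi(\delta^\Phi(O))\,\dd s$ (equivalently, it is Taylor's theorem with integral remainder for the $C^n$ orbit), and the remainder is a well-defined Bochner integral over the compact simplex $\Sigma^n_t$ because $(\delta^\Phi)^n(O)\in\caA$ and $s\mapsto\tau_s^\Phi((\delta^\Phi)^n(O))$ is norm-continuous and, $\tau_s^\Phi$ being isometric, bounded.

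The one point that needs real work is the identification, for $O=G^\Psi\in\caL_b(Z)$ with $\Psi\in\caB_b(Z)$, of the action of the generator with the interaction-level formula~(\ref{eq:DerivationForTwoInteractions}): namely $O\in\mathrm{Dom}(\delta^\Phi)$ and $\delta^\Phi(O)=G^{\delta^\Phi(\Psi)}$. I would argue this from closedness of $\delta^\Phi$. Approximate $O$ by the local truncations $O_\Lambda:=\sum_{X\subset\Lambda}\Psi(X)\in\caA_{\mathrm{loc}}$, which converge to $O$ in $\caA$ by Lemma~\ref{lemma: finite Z interaction}. By linearity and~(\ref{delta on Aloc}), $\delta^\Phi(O_\Lambda)=\sum_{Y\subset\Lambda}\delta^\Phi(\Psi(Y))$. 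Proposition~\ref{prop:ActionOfDeltas}, applied with the finite set $Z$, puts $\delta^\Phi(\Psi)$ in some $\caB_{b''}(Z)$, and the double series $\sum_{Y,Y'}\Vert[\Phi(Y),\Psi(Y')]\Vert$ is absolutely convergent (use~(\ref{eq:TotalIntAtxFiniteAndDecays}) and $|Z|<\infty$ as in the proof of Lemma~\ref{lemma: finite Z interaction}); rearranging and invoking Lemma~\ref{lemma: finite Z interaction} for $\delta^\Phi(\Psi)$ shows $\delta^\Phi(O_\Lambda)\to G^{\delta^\Phi(\Psi)}$ in $\caA$. Closedness of the generator then gives the claim, and in particular $\delta^\Phi$ maps $\caL(Z)$ into itself; Proposition~\ref{prop:DynamicsIsAlmostLocalObservable} does the same for each $\tau_t^\Phi$.

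Iteration is then painless. By Proposition~\ref{prop:ActionOfDeltas}, $\delta^\Phi$ sends $\caL_b(Z)$ into $\caL_{b_1}(Z)$ with $b_1<\min\{b,b'\}\le b'$, applying it again gives $\caL_{b_2}(Z)$ with $b_2<\min\{b_1,b'\}=b_1$, and so forth, so picking a strictly decreasing sequence $b>b_1>b_2>\cdots>0$ of rates (only the finiteness of the constants $C_\delta$ matters here, not their size) yields $(\delta^\Phi)^n(O)\in\caL_{b_n}(Z)\subset\mathrm{Dom}(\delta^\Phi)$ for all $n$, hence $O\in\bigcap_n\mathrm{Dom}((\delta^\Phi)^n)$. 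Plugging this into the group-theoretic argument of the first paragraph --- and using the invariance of $\caL(Z)$ under $\tau_t^\Phi$ from Proposition~\ref{prop:DynamicsIsAlmostLocalObservable} to keep the intermediate objects almost local --- finishes the proof.

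The main obstacle is the second paragraph: reconciling the abstract generator with the concrete interaction-level derivation on $\caL(Z)$, where one must be careful about which truncations are used and why the partial sums converge to $G^{\delta^\Phi(\Psi)}$ and not to something else. Everything before and after that is either already contained in Propositions~\ref{prop:DynamicsIsAlmostLocalObservable}--\ref{prop:ActionOfDeltas} or is the textbook correspondence between powers of the generator, smoothness of orbits, and the Duhamel expansion.
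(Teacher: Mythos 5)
Your proof is correct, and structurally it rests on the same two pillars the paper itself points to --- Proposition~\ref{prop:DynamicsIsAlmostLocalObservable} for invariance of $\caL(Z)$ under $\tau_t^\Phi$ and Proposition~\ref{prop:ActionOfDeltas} for invariance under $\delta^\Phi$ --- so it is not a different route. What you add, and what the paper states only as an assertion in the remark following Proposition~\ref{prop:ActionOfDeltas}, is the actual verification that the interaction-level map $G^\Psi\mapsto G^{\delta^\Phi(\Psi)}$ agrees with the restriction of the abstract generator of $\tau_t^\Phi$ to $\caL(Z)$. Your closedness argument (truncate $O$ to $O_\Lambda\in\caA_{\mathrm{loc}}$, use absolute convergence of the double series coming from Lemma~\ref{lemma: finite Z interaction} and the proof of Proposition~\ref{prop:ActionOfDeltas} to identify the limit of $\delta^\Phi(O_\Lambda)$, then invoke closedness) is precisely the missing bridge and is the only non-routine step; the iteration with a strictly decreasing chain of rates $b>b_1>b_2>\cdots>0$ and the textbook passage from $O\in\bigcap_n\mathrm{Dom}((\delta^\Phi)^n)$ to $C^\infty$ orbits and the Duhamel/Taylor expansion are straightforward, as you say. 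One cosmetic point unrelated to your argument: the iterated Duhamel remainder is naturally written with the integrand evaluated at the \emph{smallest} simplex variable, i.e.\ $\tau^\Phi_{s_1}\bigl((\delta^\Phi)^n(O)\bigr)$, since $\int_0^t\frac{(t-s)^{n-1}}{(n-1)!}\,\dd s$ is the volume of $\{s\leq s_2\leq\cdots\leq s_n\leq t\}$; the norm bound is unaffected, so this does not touch any estimate in the paper.
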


\section{A product automorphism of lowest order} \label{Sec:Anchor}

With these preliminaries at hand, we now prove the validity of what is sometimes referred to as the symmetric Trotter product formula in the context of an infinite quantum lattice system. Let $H\in\caL$ be a local Hamiltonian with interaction $\Phi\in\caB$. We assume that
\begin{equation}\label{K's}
H_\Lambda = \sum_{j=1}^k K_{j,\Lambda}
\end{equation}
where $K_{j,\Lambda}\in\caL$ are local Hamiltonians with corresponding interactions $\Phi_{j}\in\caB$. 
We denote $\tau^j_t = \tau_t^{\Phi_j}$ and $\delta^j = \delta^{\Phi_j}$. Let us first consider the automorphism of $\caA$ defined by
\begin{equation}\label{Lowest order symmetric}
\sigma^{(1)}_t(O) := \tau^1_{t/2}\circ\cdots\circ\tau^k_{t/2}\circ\tau^k_{t/2}\circ\cdots\circ\tau^1_{t/2}(O).
\end{equation}
We assume that $\Phi\in \caB_{b'}$ and $\Phi_j\in\caB_{b_j}$ for $j=1,...,k$. We denote
\begin{equation}\label{MaxNorm}
\caN := \max\left\{\Inorm \Phi \Inorm_{b'}, \Inorm \Phi_1 \Inorm_{b_1},\ldots, \Inorm \Phi_k \Inorm_{b_k}\right\}.
\end{equation}
\begin{thm}\label{Thm: anchor}
Let $n \in \bbN$, $t \in \bbR_+$, $\mu = \frac{t}{n}$ and let 
\begin{equation*}
\pi^{(1)}_{t,n}(O) := \left(\sigma^{(1)}_\mu\right)^n(O).
\end{equation*}
Let $Z\in\caP_{\mathrm{fin}}(\Gamma)$ and let $b>0$. There are positive constants $C,v$ depending only on $b,b',b_1,\ldots,b_k$ and $k$ such that if $O \in \caL_b(Z)$,
\begin{equation*}
\left\Vert  \tau^\Phi_t(O) - \pi_{t,n}^{(1)}(O) \right\Vert
\leq C \Inorm \Psi \Inorm_{b,Z}\vert Z\vert  \frac{(\caN t)^3 \ep{v t}}{n^2}.
\end{equation*}
Here, $\Psi$ is an interaction associated with $O$.
\end{thm}
Note that in the case $k=2$, namely $H = A+B$, and in finite volume, the product automorphism reduces to the adjoint action of $\left(\ep{i\frac{t}{2n}B}\ep{i\frac{t}{n}A}\ep{i\frac{t}{2n}B}\right)^n$, which is indeed well-known to converge to the adjoint action of $\ep{i t (A+B)}$ as $n\to\infty$.
While the convergence is trivially uniform in the observable $O$ in finite volume (the finite volume algebras being finite dimensional), this uniformity cannot be expected to hold in the infinite volume limit. Pointwise convergence in norm is a consequence of the general Banach space theory originally due to Chernoff, see again~\cite{BratRob1}. In this context, the interest of Theorem~\ref{Thm: anchor} is that it provides an  explicit rate of convergence $n^{-2}$, for any almost local $O\in\caL(Z)$ and any finite set $Z$ (a fortiori for any strictly local observable).

\begin{proof}[Proof of Theorem~\ref{Thm: anchor}]
We first decompose the time interval $[0,t]$ in $n$ subintervals of width $\mu = \frac{t}{n}$ to get the following telescopic sum:
\begin{equation}\label{Telescopic}
 \tau^\Phi_t(O) - \pi_{t,n}^{(1)}(O)
= \sum _{j = 0}^{n-1} \left(\sigma^{(1)}_\mu\right)^j\left((\tau^\Phi_\mu -\sigma^{(1)}_\mu)\left( (\tau^\Phi_{\mu})^{n-j-1}(O) \right)\right).
\end{equation}
For any almost local observable $\widetilde O\in \caL_{b''}(Z)$, we see that
\begin{equation*}
\frac{d}{ds}\left.\left(\sigma^{(1)}_{s}\circ\tau^\Phi_{-s}\right)(\widetilde O)\right\vert_{s=0}
= (2\sum_{j=1}^k \frac{\delta^j}{2} - \delta^\Phi)(\widetilde O) = 0
\end{equation*}
by~(\ref{Lowest order symmetric}) and~(\ref{K's}). Similarly, but with a little more algebra, 
\begin{align*}
\frac{d^2}{ds^2}\left.\left(\sigma^{(1)}_{s}\circ\tau^\Phi_{-s}\right)(\widetilde O)\right\vert_{s=0}
&= \bigg(\frac{1}{4}\sum_{j=1}^k \Big\{\sum_{l=1}^j\delta^l \delta^j + \sum_{l=j+1}^k\delta^j \delta^l + \delta^j \sum_{l=1}^k \delta^l-2\delta^j\delta^\Phi \\
&\qquad\qquad\quad+ \sum_{l=1}^k \delta^l \delta^j + \sum_{l=j+1}^k\delta^l \delta^j + \sum_{l=1}^j\delta^j \delta^l  -2\delta^j\delta^\Phi\Big\} \\
&\qquad-2\sum_{j=1}^k \frac{1}{2}\delta^j\delta^\Phi
+\delta^\Phi\delta^\Phi
\bigg)(\widetilde O ).
\end{align*}
Writing $\sum_{j=1}^k\sum_{l=1}^j \delta^l \delta^j = (\delta^\Phi)^2 - \sum_{j=1}^k\sum_{l=j+1}^k \delta^l \delta^j$ and proceeding similarly for the second-to-last term of the second line, we conclude that this derivative vanishes again by $\delta^\Phi = \sum_{j=1}^k\delta^j$. Thus,
\begin{align}%
\tau^\Phi_\mu(\widetilde O) -\sigma^{(1)}_\mu(\widetilde O)
&=-\left.\left(\sigma^{(1)}_{s}\circ\tau^\Phi_{-s}\right)(\tau^\Phi_\mu(\widetilde O))\right\vert_{s=0}^{s=\mu} \nonumber \\
&= - \int_{\Sigma_\mu^3} \frac{d^3}{ds_3^3}\left(\sigma^{(1)}_{s_3}\circ\tau^\Phi_{-s_3}\right)(\tau^\Phi_\mu(\widetilde O)) d^3s.\label{Infinitesimal steps}
\end{align}

Distributing the three derivatives across the $2k$ factors of $\sigma^{(1)}_{s_3}\circ\tau^\Phi_{-s_3}$ 
inserts three derivations to the product $\tau^1_{s_3/2}\circ\cdots\circ\tau^k_{s_3}\circ\cdots\circ\tau^1_{s_3/2}\circ\tau^\Phi_{-s_3}(\tau^\Phi_\mu(\widetilde O))$. By Propositions~\ref{prop:DynamicsIsAlmostLocalObservable} and~\ref{prop:ActionOfDeltas} all terms are well-defined and belong to $\caL(Z)$ since $\tau^\Phi_\mu(\widetilde O) \in\caL(Z)$. Specifically, each application of an automorphism yields an exponential factor and the three derivations provide an additional $\caN^3$, see~(\ref{MaxNorm}).
Moreover, each of these operations yields an additional multiplicative constant, resulting in an overall factor that depends on $k$ and on the rates $b',b_1,\ldots,b_k$, but it is independent of $n$. It follows that 
for any $\tilde b < \min\{b'', 2^{-p}b', 2^{-p}b_1, \ldots, 2^{-p}b_k\}$,
 the interaction norm of each term is bounded by $C\caN^3 \ep{c((k+1)s_3 + \mu)}\Inorm\widetilde O\Inorm_{b'',Z}$, where $(k+1)s_3 + \mu$ is the total time (in absolute value) involved in~$\left(\sigma^{(1)}_{s_3}\circ\tau^\Phi_{-s_3}\right)(\tau^\Phi_\mu(\widetilde O))$ and the constant $c$ is the maximum of all $(2k+1)$ constants $\kappa(\cdot,\cdot)$ given by Proposition~\ref{prop:DynamicsIsAlmostLocalObservable}. Finally, we recall from~\eqref{Telescopic} that $\widetilde O = (\tau^\Phi_{\mu})^{n-j-1}(O) = \tau^\Phi_{(n-j-1)\mu}(O)$ with $O\in\caL_b(Z)$, so that its $b''$-interaction norm (with $b''<\min\{b,2^{-p} b'\}$) is bounded by $ C\ep{c(n-j-1)\mu}\Inorm \Psi \Inorm_{b,Z}$, where $C,c$ are, again, independent of $n$. Gathering all estimates, 
\begin{align*}
\left\Inorm \left(\tau^\Phi_\mu -\sigma^{(1)}_\mu\right)\left( (\tau^\Phi_{\mu})^{n-j-1}(O) \right) \right\Inorm_{\tilde b,Z}
&\leq C \Inorm \Psi \Inorm_{b,Z} \caN^3 \ep{c (n-j)\mu}\int_{\Sigma_\mu^3} \ep{c(k+1)s_3} d^3s \\
&\leq C \Inorm \Psi \Inorm_{b,Z} \ep{c (n-j)\mu}\frac{(\caN\mu)^3}{3!}\ep{c(k+1)\mu}.
\end{align*}
Since $\sigma^{(1)}_\mu$ preserves the operator norm, each term of~(\ref{Telescopic}) is bounded by
\begin{equation*}
\left\Vert (\sigma^{(1)}_\mu)^j((\tau^\Phi_\mu -\sigma^{(1)}_\mu) ((\tau^\Phi_{\mu})^{n-j-1}(O) ))\right\Vert
\leq C \Inorm \Psi \Inorm_{b,Z} \vert Z\vert \ep{v t }(\caN\mu)^3
\end{equation*}
by Lemma~\ref{lemma: finite Z interaction}, where we used that $c(n-j+(k+1))\mu \leq vt$, where $v=c(k+2)$. This estimate being uniform across the $n$ terms of~(\ref{Telescopic}), we immediately conclude with the claim of the theorem.
\end{proof}

\section{Arbitrary order}\label{sec:ArbOrder}

The symmetric Trotter formula discussed in the previous section has an error of order $n^{-2}$. As pioneered by Suzuki in, e.g.~\cite{suzuki1991general}, a recursive construction can be build upon it to generate higher order product formulae. We now show that they too extend to the infinite volume setting. 

\subsection{Time reversal}
Let us recall the automorphism $\sigma_t^{(1)}$ defined for all $t\in\bbR$ by~(\ref{Lowest order symmetric}). Since $\sigma^{(1)}_t$ is a composition of automorphisms, it is an automorphism, but the fact that the individual factors do not commute with each other breaks the group property of the fundamental time evolution $\tau^\Phi_t\circ\tau^\Phi_{s} = \tau^\Phi_{t+s}$. However, the specific `symmetric' form of~(\ref{Lowest order symmetric}) implies that
\begin{equation*}
\sigma^{(1)}_{-t}\circ\sigma^{(1)}_t = \mathrm{id}.
\end{equation*}
A product automorphism having this property shall be called time-reversal symmetric. Theorem~\ref{Thm: anchor} shows that, despite its label $(1)$, the corresponding product automorphism $\pi_{t,n}^{(1)}$ is in fact a second order approximation of $\tau_t^\Phi$. This improvement from any odd order to the next even order is in fact general for time-reversal symmetric product approximations.
\begin{prop}\label{prop: TRI}
Let $m\in\bbN$ and let $\{\sigma^{(2m-1)}_\mu:\mu\in\bbR\}$ be an $(2m-1)$-th order product approximation of $\tau^\Phi_\mu$ in the sense that
\begin{equation}\label{lower orders}
\left.\frac{d^j}{d\mu^j} \left(\tau^\Phi_\mu(\widetilde O) - \sigma^{(2m-1)}_\mu(\widetilde O)\right)\right\vert_{\mu = 0} = 0\qquad(j\in\{0,\ldots,2m-1\})
\end{equation}
for any $\widetilde O \in \caL(Y)$. If $\sigma^{(2m-1)}_\mu$ is time-reversal symmetric,
\begin{equation*}
\sigma^{(2m-1)}_{-\mu}\circ\sigma^{(2m-1)}_\mu = \mathrm{id},
\end{equation*}
then it is a $(2m)$-th order approximation of $\tau^\Phi_\mu$.
\end{prop}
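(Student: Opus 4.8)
The plan is to compare the formal Taylor expansions in $\mu$ of the two automorphisms $\tau^\Phi_\mu$ and $\sigma^{(2m-1)}_\mu$, using the fact that both are (formally) invertible with inverse obtained by sending $\mu\mapsto-\mu$, and then show that the first potentially nonzero coefficient of the difference—which a priori sits at order $2m$—must actually vanish at order $2m$ only if it is absent, because a parity argument forces it to be at order $2m+1$ or higher. Concretely, write $\tau^\Phi_\mu(\widetilde O) = \sum_{j\geq 0} \mu^j A_j(\widetilde O)$ and $\sigma^{(2m-1)}_\mu(\widetilde O) = \sum_{j\geq 0} \mu^j B_j(\widetilde O)$ as asymptotic series on $\caL(Y)$ (these are legitimate by the Corollary and Duhamel's formula, which give that both functions are infinitely strongly differentiable on $\caL(Y)$, so the comparison can be made rigorous by Taylor's theorem with remainder rather than with formal power series). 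By hypothesis \eqref{lower orders}, $A_j = B_j$ for $j = 0,\ldots,2m-1$, and one wants to show $A_{2m} = B_{2m}$ as well.

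The key observation is the following. From the group property $\tau^\Phi_{-\mu}\circ\tau^\Phi_\mu = \mathrm{id}$ one gets, by expanding the composition, a family of quadratic relations among the $A_j$; schematically $\sum_{p+q = N} (-1)^p A_p \circ A_q = 0$ for all $N\geq 1$. Likewise, time-reversal symmetry $\sigma^{(2m-1)}_{-\mu}\circ\sigma^{(2m-1)}_\mu = \mathrm{id}$ gives the identical-looking relations $\sum_{p+q = N}(-1)^p B_p\circ B_q = 0$. Now set $R_j := A_j - B_j$, so $R_j = 0$ for $j\leq 2m-1$ and we want $R_{2m} = 0$. Subtracting the two families of relations and using $A_j = B_j$ for $j\leq 2m-1$, the lowest order at which a term involving $R$ can appear is $N = 2m$, where the relation collapses to $(-1)^0 A_0\circ R_{2m} + (-1)^{2m} R_{2m}\circ A_0 = 0$ (all cross terms $A_p\circ R_q$ with $0<q<2m$ vanish, and $R_0 = 0$). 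Since $A_0 = \mathrm{id}$, this reads $R_{2m} + R_{2m} = 0$, i.e. $2 R_{2m} = 0$, hence $R_{2m} = 0$. This is exactly the statement that the $(2m)$-th derivative of $\tau^\Phi_\mu(\widetilde O) - \sigma^{(2m-1)}_\mu(\widetilde O)$ at $\mu = 0$ vanishes, which together with \eqref{lower orders} gives the claimed $(2m)$-th order accuracy.

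I would organize the write-up as: (1) introduce the Taylor coefficients $A_j, B_j$ as bounded operators on $\caL(Y)$ (justified by the Corollary, so that the asymptotic expansions are genuine), and record $A_0 = B_0 = \mathrm{id}$; (2) derive the recursion $\sum_{p+q=N}(-1)^p A_p\circ A_q = 0$ from differentiating $\tau^\Phi_{-\mu}\circ\tau^\Phi_\mu = \mathrm{id}$ (equivalently, Leibniz on the composition), and the analogous recursion for $B_j$ from time-reversal symmetry; (3) set $R_j = A_j - B_j$, note $R_j = 0$ for $j < 2m$ by \eqref{lower orders}, subtract the two recursions at $N = 2m$, and observe all terms cancel except $2R_{2m}$, concluding $R_{2m} = 0$; (4) translate back into the derivative statement.

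The main obstacle—more bookkeeping than conceptual—is to set up the comparison of derivatives cleanly enough that the "$\sum_{p+q=N}(-1)^pA_p\circ A_q=0$" identities are genuinely available and not just formal: one must either invoke that $\mu\mapsto\tau^\Phi_\mu$ and $\mu\mapsto\sigma^{(2m-1)}_\mu$ are smooth maps into the bounded operators on $\caL(Y)$ (in the appropriate operator topology on that Banach space, which the Corollary and Propositions~\ref{prop:DynamicsIsAlmostLocalObservable}--\ref{prop:ActionOfDeltas} provide) and then differentiate the identity maps, or argue purely with Taylor polynomials with remainder and match coefficients. A secondary subtlety is that $\sigma^{(2m-1)}_\mu$ need not literally be an automorphism group, so the only structural input one is allowed to use about it is precisely the time-reversal relation $\sigma^{(2m-1)}_{-\mu}\circ\sigma^{(2m-1)}_\mu = \mathrm{id}$—but that is exactly what the parity argument consumes, and nothing more is needed.
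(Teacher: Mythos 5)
Your proof is correct and follows essentially the same path as the paper's: both exploit the identity obtained by subtracting $\tau^\Phi_{-\mu}\circ\tau^\Phi_\mu = \mathrm{id}$ from $\sigma^{(2m-1)}_{-\mu}\circ\sigma^{(2m-1)}_\mu = \mathrm{id}$, then distribute $2m$ derivatives across the composition so that by hypothesis~\eqref{lower orders} all cross terms vanish, leaving $2R_{2m}=0$. The paper phrases this last step as the parity observation that even-order derivatives of $f(\mu)$ and $f(-\mu)$ coincide at $\mu=0$; your $(-1)^{2m}=1$ together with $A_0=B_0=\mathrm{id}$ is the same fact in Taylor-coefficient language.
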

\begin{proof}
The identity
\begin{equation*}
\widetilde O = \tau^\Phi_{-\mu}\circ(\tau^\Phi_\mu-\sigma^{(2m-1)}_\mu)(\widetilde O) + (\tau^\Phi_{-\mu}-\sigma^{(2m-1)}_{-\mu})\circ\sigma^{(2m-1)}_\mu(\widetilde O) + \sigma^{(2m-1)}_{-\mu}\circ\sigma^{(2m-1)}_\mu(\widetilde O)
\end{equation*}
and time-reversal symmetry imply that
\begin{equation*}
\tau^\Phi_{-\mu}\circ(\tau^\Phi_\mu-\sigma^{(2m-1)}_\mu)(\widetilde O) + (\tau^\Phi_{-\mu}-\sigma^{(2m-1)}_{-\mu})\circ\sigma^{(2m-1)}_\mu(\widetilde O) = 0.
\end{equation*}
The derivative of order $2m$ of this equation at $\mu=0$ reduces by~(\ref{lower orders}) to
\begin{equation*}
\left.\frac{d^{2m}}{d\mu^{2m}}\left(\tau_\mu^\Phi-\sigma^{(2m-1)}_\mu\right)(\widetilde O)\right\vert_{\mu=0}
+ \left.\frac{d^{2m}}{d\mu^{2m}}\left(\tau_{-\mu}^\Phi-\sigma^{(2m-1)}_{-\mu}\right)(\widetilde O)\right\vert_{\mu=0} =0,
\end{equation*}
which concludes the proof since the two derivatives of even order are equal.
\end{proof}

\subsection{Suzuki's Ansatz} 

We now recall Suzuki's inductive construction~\cite{suzuki1991general} of higher order product formulae, translated in the present language of automorphisms. Since Section~\ref{Sec:Anchor} provided a time reversal symmetric approximation of order $2$, we shall use it to anchor the induction. For that, we first let $\sigma_\mu^{(2)}:=\sigma_\mu^{(1)}$ for any $\mu\in\bbR$.

Let $\sigma_\mu^{(2m)}$ be a time-reversal symmetric $(2m)$-th order product approximation of $\tau_\mu^\Phi$ in the sense of~(\ref{lower orders}). A higher order approximation can be constructed as follows. Let $r = 2\ell + 1 \geq 3$ be an odd integer and let
\begin{equation}\label{s coefficients}
s_{m} := \frac{1}{(r-1)-(r-1)^{\frac{1}{2m+1}}}.
\end{equation}
We immediately point out firstly that $2\ell s_m + (1-(r-1)s_m) = 1$ and secondly that $ -1+(r-1)s_m =s_m(r-1)^{\frac{1}{2m+1}}$, and so
\begin{equation}\label{scalings}
(r-1)s_m^{2m+1} + \left(1-(r-1)s_{m}\right)^{2m+1} = 0.
\end{equation}
We now define the following product automorphisms:
\begin{equation}\label{Higher order construction}
\sigma_\mu^{(2m+1)} :=\left(\sigma_{s_{m}\mu}^{(2m)}\right)^\ell\circ\sigma_{(1-(r-1)s_{m})\mu}^{(2m)}\circ\left(\sigma_{s_{m}\mu}^{(2m)}\right)^\ell,
\quad \sigma_\mu^{(2m+2)}:=\sigma_\mu^{(2m+1)}.
\end{equation}
This procedure provides, given an odd integer $r$, a family of automorphisms $\{\sigma_\mu^{(2m+1)}:m\in\bbN\}$ parametrized by $\mu\in\bbR$.

For the following result, recall the setting of Section~\ref{Sec:Anchor}.
\begin{thm}\label{Thm: TRI-Trotter}
Let $\ell \geq 1$ and $r = 2\ell +1$. For all $m\geq 1$, $\{\sigma_{s}^{(m)}:s\in\bbR\}$ is time-reversal symmetric. Let $n \in \bbN$, $t \in \bbR_+$, and $\mu = \frac{t}{n}$. Define
\begin{equation*}
\pi^{(m)}_{t,n} := \left(\sigma^{(m)}_\mu\right)^n.
\end{equation*}
Let $Z\in\caP_{\mathrm{fin}}(\Gamma)$ and let $b>0$. There are positive constants $C,v$ such that if $O \in \caL_b(Z)$, then 
\begin{equation}\label{main formula}
\left\Vert  \tau^\Phi_t(O) - \pi_{t,n}^{(m)}(O) \right\Vert
\leq C \Inorm\Psi\Inorm_{b,Z} 
\caN^{\alpha+1}  \vert Z\vert \frac{t^{\alpha+1}\ep{vt}}{n^{\alpha}},
\end{equation}
with 
$
	\alpha = \begin{cases}\;\;\,m \quad\,, \textrm{ if $m$ is even} \\
								m+1 \,, \textrm{ if $m$ is odd} 
			\end{cases}
$.\\
Here, $\Psi$ is an interaction associated with $O$. The constants $C,v$ depend on $b,b',b_1,\ldots,b_k,k,r$ and the order $m$, but they are independent of $Z,n$ and $t$.
\end{thm}
\begin{proof}
The symmetry for all $m$ is immediate by~(\ref{Higher order construction}) since $\sigma_\mu^{(1)}$ is symmetric. The estimate holds by Theorem~\ref{Thm: anchor} for $m=1,2$, so we proceed by induction. We assume that $\sigma^{(2m)}_\mu$ is a $(2m)$-th order approximation of $\tau_\mu^\Phi$ and that the claim of the theorem holds for $2m$. We write as in \eqref{Telescopic}
\begin{equation}\label{eq:Telescopic-General}
 \tau^\Phi_t(O) - \pi_{t,n}^{(2m+1)}(O)
= \sum _{j = 0}^{n-1} \left(\sigma^{(2m+1)}_\mu\right)^j\left((\tau^\Phi_\mu -\sigma^{(2m+1)}_\mu)\left( (\tau^\Phi_{\mu})^{n-j-1}(O) \right)\right),
\end{equation}
and proceed with an estimate on $(\tau^\Phi_\mu -\sigma^{(2m+1)}_\mu)(\widetilde O)$ for an almost local observable $\widetilde O\in\caL(Y)$. Here again, we decompose the interval $[0,\mu]$ into $r = 2\ell + 1$ intervals according to~(\ref{Higher order construction}) and obtain
\begin{align*}
\tau^\Phi_\mu(\widetilde O) -\sigma_\mu^{(2m+1)}(\widetilde O)
&= \sum_{j=0}^{\ell-1} \left(\sigma_{s_{m}\mu}^{(2m)}\right)^j \circ \left(\tau_{s_{m}\mu}^\Phi - \sigma_{s_{m}\mu}^{(2m)}\right) \circ \left(\tau_{((2\ell-j-1)s_{m} + \tilde s_m)\mu}^{\Phi}\right) (\widetilde O) \\
&\quad + \left(\sigma_{s_{m}\mu}^{(2m)}\right)^\ell \circ \left(\tau^\Phi_{\tilde s_m\mu} - \sigma_{\tilde s_m \mu}^{(2m)}\right) \circ \left(\tau_{\ell s_{m}\mu}^\Phi\right) (\widetilde O)  \\
&\quad + \sum_{j=1}^{\ell} \left(\sigma_{s_{m}\mu}^{(2m)}\right)^{\ell} \circ \sigma_{\tilde s_m\mu}^{(2m)}\circ  \left(\sigma_{s_{m}\mu}^{(2m)}\right)^{j-1} \circ \left(\tau_{s_{m}\mu}^\Phi - \sigma_{s_{m}\mu}^{(2m)}\right) \circ \left(\tau_{(\ell-j)s_{m}\mu}^\Phi\right) (\widetilde O)
\end{align*}
where we denoted $\tilde s_m = 1-(r-1)s_{m}$. By the induction hypothesis, $\left.\frac{d^{j}}{d\mu^{j}}\left(\tau^\Phi_\mu(\widetilde O) - \sigma_\mu^{(2m)}(\widetilde O)\right)\right\vert_{\mu=0}=0$ for all $j=0,\ldots,2m$. This and the above identity imply first of all that the same holds with $\sigma_\mu^{(2m+1)}$ instead of $\sigma_\mu^{(2m)}$, and secondly that
\begin{equation*}
\left.\frac{d^{2m+1}}{d\mu^{2m+1}}\left(\tau^\Phi_\mu(\widetilde O)-\sigma_\mu^{(2m+1)}(\widetilde O)\right)\right\vert_{\mu=0}
= \left(2\ell s_m^{2m+1} + \tilde s_m^{2m+1}\right)\left.\frac{d^{2m+1}}{d\nu^{2m+1}}\left(\tau^\Phi_{\nu}(\widetilde O)-\sigma_{\nu}^{(2m)}(\widetilde O)\right)\right\vert_{\nu=0}.
\end{equation*}
Since $2\ell = r-1$, this vanishes by~(\ref{scalings}), so that $\sigma_\mu^{(2m+1)}$ is a $(2m+1)$-th order approximation. Since $\sigma^{(2m+1)}_\mu$ is time-reversal symmetric we get from Proposition~\ref{prop: TRI} that the $(2m+2)$-th derivative similarly vanishes at $\mu=0$. Hence, 
\begin{equation*}
\tau^\Phi_\mu(\widetilde O)-\sigma_\mu^{(2m+1)}(\widetilde O)
= - \int_{\Sigma^{2m+3}_\mu}\frac{d^{2m+3}}{d{u}^{2m+3}}\left(\sigma_{{u}}^{(2m+1)}\circ\tau^\Phi_{-{u}}\right)\left(\tau^\Phi_\mu(\widetilde O) \right) d^{2m+3}\!{u},
\end{equation*}
and we can proceed as in the proof of~Theorem~\ref{Thm: anchor}. 

There are $r^m(2k-2)+1$ factors in $\sigma_u^{(2m+1)}$ and hence a total of $2^{2m+3}(r^m(k-1)+1)^{2m+3}$ 
terms in the derivative, each of them involving a combination of $2m+3$ derivations from $\{\delta^\Phi\}\cup\{\delta^{j}:j=1,\ldots, k\}$. If ${u}^{(m)}$ is the total time (in absolute value) involved in $\sigma_{{u}}^{(m)}$ (for example, ${u}^{(1)} = {u}^{(2)} = k{u}$), then ${u}^{(2m+1)} = (r-1)s_{m}{u}^{(2m)} + \left\vert (1-(r-1)s_m)\right\vert {u}^{(2m)} = (2(r-1)s_m -1){u}^{(2m)}$ since $1-(r-1)s_m = -s_m(r-1)^{\frac{1}{2m+1}}<0$. Hence,
\begin{equation*}
{u}^{(2m+1)} = \bigg(\prod_{j=1}^m(2(r-1)s_j -1)\bigg) k{u}.
\end{equation*}
Setting $\widetilde O = (\tau^\Phi_{\mu})^{n-j-1}(O)$, we conclude by Propositions~\ref{prop:DynamicsIsAlmostLocalObservable} and~\ref{prop:ActionOfDeltas} that
\begin{multline*}
\left\Inorm\frac{d^{2m+3}}{d{u}^{2m+3}}\left(\sigma_{{u}}^{(2m+1)}\circ\tau^\Phi_{-{u}}\right)\left(\tau^\Phi_{(n-j)\mu}(O) \right)\right\Inorm_{\tilde b, Z} \\
\leq C \Inorm\Psi\Inorm_{b,Z} 
\caN^{2m+3} \ep{c \mu(n-j)} \ep{c ({u}^{(2m +1)} +{u})},
\end{multline*}
for any $\tilde b < \min\{b, 2^{-p}b', 2^{-p}b_1,\ldots, 2^{-p}b_k\}$
where $C,c$ depend on $k$, the rates $b,b',b_1,\ldots,b_k$ as well as the choice of $r$ and $m$. Integrating this over the simplex $\Sigma^{2m+3}_\mu$ and gathering all constants yields
\begin{equation*}
\left\Vert 
\left(\sigma^{(2m+1)}_\mu\right)^j\left((\tau^\Phi_\mu -\sigma^{(2m+1)}_\mu)\left( (\tau^\Phi_{\mu})^{n-j-1}(O) \right)\right)
 \right\Vert
\leq C \Inorm\Psi\Inorm_{b,Z} 
(\caN\mu)^{2m+3}  \vert Z\vert \ep{v t},
\end{equation*}
where the constant $v$ depends again on $k, b,b',b_1,\ldots,b_k, r, m$. 
Since there are $n$ such terms in~(\ref{eq:Telescopic-General}), we have now proved that~(\ref{main formula}) holds for $2m+1$ and therefore also for $2m+2$ by the definition~(\ref{Higher order construction}) of $\sigma_\mu^{(2m+2)}$, concluding the induction.
\end{proof}
\begin{rem}
The theorem should not be misinterpreted as an invitation to take a limit in $m$. Rather, it provides for each fixed $m$ a formula that scales as $n^{-m}$ as $n\to\infty$, while $t$ is arbitrary but fixed. As can be read from the proof, the constant $C$ scales as $\frac{r^{m^2}\ep{r^m}}{m!}$, underlying the importance of picking a small possible $r$, namely $r=3,5$. 
\end{rem}
In the definition~(\ref{Higher order construction}) the interval of size $\mu$ is split into $r = 2\ell+1$ subintervals of width $s_m\mu$ (for $2\ell$ of them) and $(1-(r-1)s_{m})$ (for the middle one). As is clear in the proof (see also~\cite{suzuki1991general}) this choice is largely arbitrary. The claim of the theorem would continue to hold if these $r$ coefficients were replaced by $r$ other real coefficients $\{p_{m,j}:j\in\{1,\ldots,r\}\}$ provided $p_{m,j} = p_{m,r+1-j}$ for all $j=1,\ldots,\ell$ as well as
\begin{equation}\label{pmj Conds}
\sum_{j=1}^r p_{m,j} = 1,\qquad 
\mathrm{and}\qquad\sum_{j = 1}^r p_{m,j}^{2m+1} = 0.
\end{equation}
Reality of the coefficients ensures that $\pi_{t,n}^{(m)}$ are automorphisms and the symmetric choice of coefficient on either side of $p_{m,\ell+1}$ is for time-reversal symmetry. Clearly, there is no non-trivial positive solution of these equations, and (\ref{Higher order construction}) indeed has $p_{m,\ell+1} = -s_m(r-1)^{\frac{1}{2m+1}}<0$, as already pointed out. The appearance of such a negative time evolutions to improve the order of the approximation is reminiscent of the decomposition proposed in~\cite{haah2021quantum}.

\begin{rem}\label{rem:ScalingSmaller1For5}
If $r=3$, then $\vert s_m \vert, \vert 1-(r-1)s_m \vert>1$ with  $\lim_{m\to\infty} s_m =1, \lim_{m\to\infty} (1-(r-1)s_m ) =-1$ and so the individual time intervals in the product scale as $\frac{t}{n}$, independently of $m$ for large $m$. On the other hand, if $r=5,7,\ldots$, then $\vert s_m \vert, \vert 1-(r-1)s_m \vert<1$ with $\lim_{m\to\infty}s_m = \frac{1}{r-2}$ and  $\lim_{m\to\infty} (1-(r-1)s_m ) =-\frac{1}{r-2}$. Hence the individual time intervals in the product scale as $\frac{1}{(r-2)^m}\frac{t}{n}$. The inductive construction and the appearance of negative signs yield a fractal path in the time domain. This behaviour --- already noted in~\cite{suzuki1991general} ---  is exhibited in Figure~\ref{fig:Nice plots}. 

\begin{figure}
\begin{minipage}{0.45\linewidth}
\includegraphics[width=\textwidth]{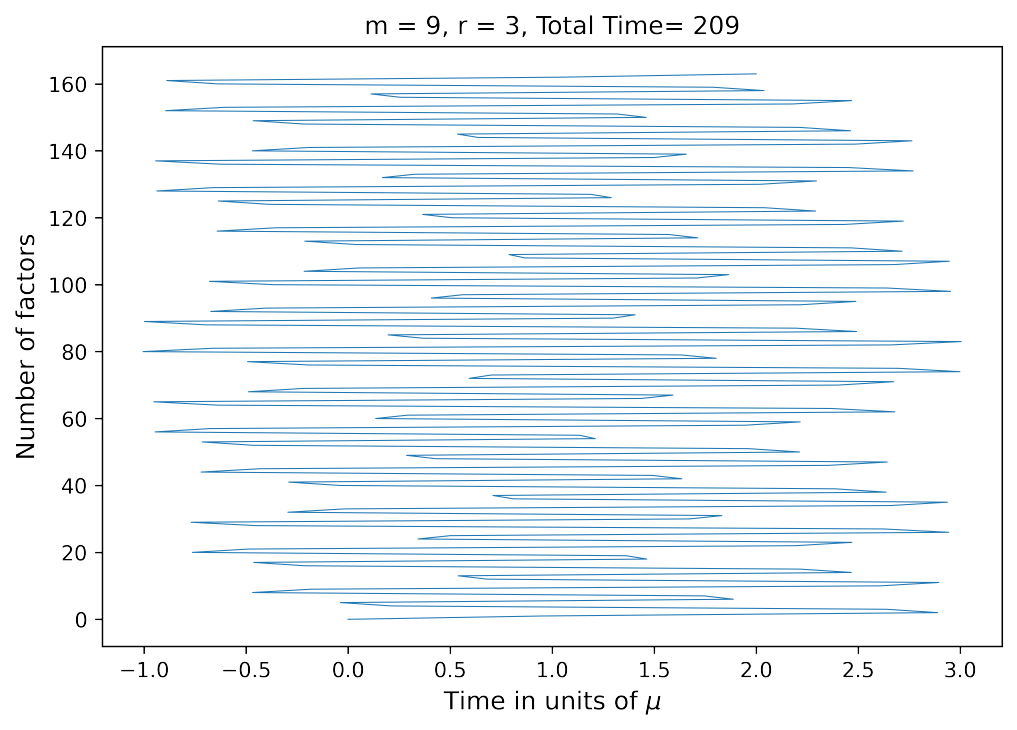}
\end{minipage}\hfill
\begin{minipage}{0.45\linewidth}
\includegraphics[width=\textwidth]{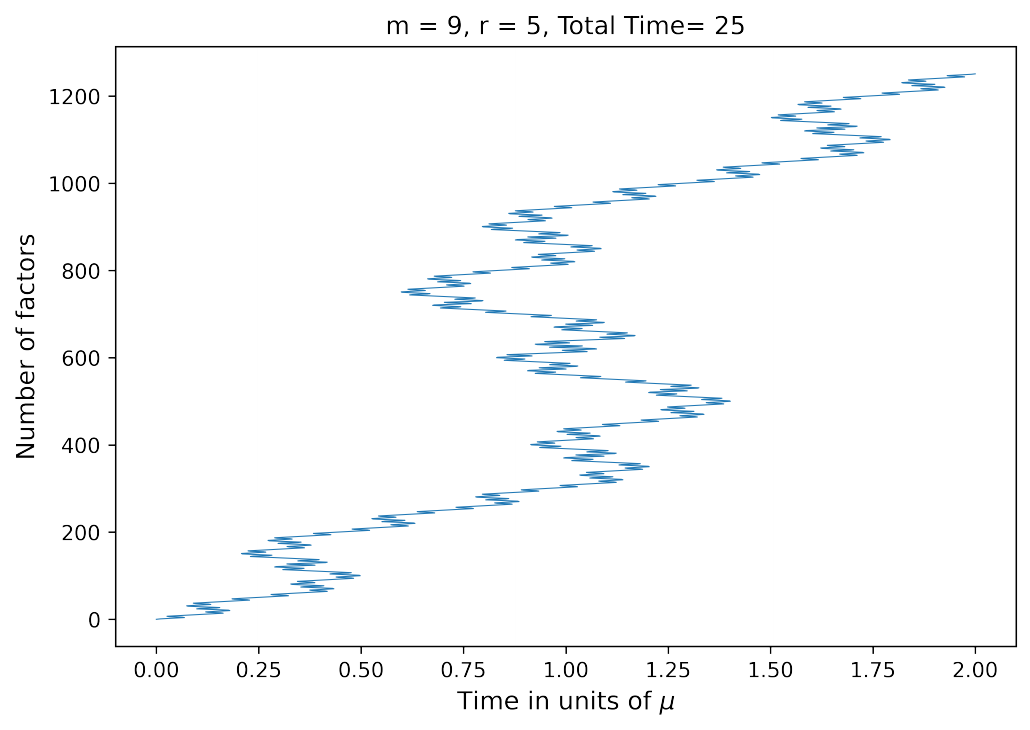}
\end{minipage}
\caption{
The discrete time steps in $\sigma_\mu^{(m)}$ for $r=3$ on the left and $r=5$ on the right, for the same order $m=9$ and in the case $k=2$. There is an order of magnitude difference between the number of terms involved, with the ratio of the number of terms being $\frac{5^42+1}{3^42+1}\simeq 7.7$.}
\label{fig:Nice plots}
\end{figure}
\end{rem}

\section{Quantum simulation: Decomposition in commuting Hamiltonians} \label{sec:QuantumSimulation}

\subsection{Finite depth unitary quantum circuits}  So far, the results are completely general in the sense that they do not require any assumption on the Hamiltonians $K_{j,\Lambda}\in\caL$ beyond their locality. In concrete applications however,
the choice of decomposition of $H$ is determined by the requirement that each $K_{j,\Lambda}$ is a sum of terms acting on spatially disjoint subsets of the lattice and hence mutually commuting. In the simple example of a one-dimensional lattice with nearest-neighbour interaction, namely $\Phi(X) = 0$ if $X\neq\{x,x+1\}$ for some $x\in\bbZ$, one would choose $\Phi_1,\Phi_2$ to be supported on pairs of neighbouring sites $\{2x,2x+1\}$, respectively $\{2x+1,2x+2\}$. 
Each dynamics $\tau_t^{\Phi_1}, \tau_t^{\Phi_2}$ is then strictly local and corresponds to the action of quantum gates in parallel, as illustrated in Figure~\ref{fig:FDQC}.
\begin{figure}
\begin{center}
\includegraphics[width=0.75\textwidth]{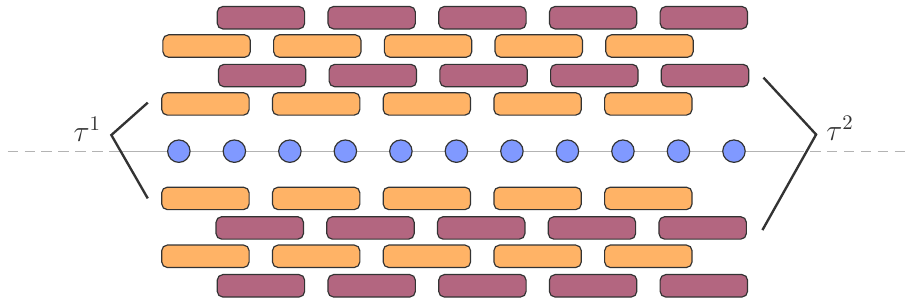}
\end{center}
\caption{A cartoon of the action of a finite depth quantum circuit on a 1-dimensional spin chain, in blue. Since each layer is generated by a commuting Hamiltonian, it is a product of commuting unitaries. Propagation is induced by the alternating action of overlapping layers.}\label{fig:FDQC}
\end{figure}
In this setting where a product formula is referred to as a \emph{finite depth unitary quantum circuit}, Theorem~\ref{Thm: TRI-Trotter} provides a quantitative bound on the error in the approximation of the full dynamics $\tau_t$ by a circuit. The number of factors in the product automorphism is referred to as the \emph{depth} of the circuit. We point out that the following is valid in arbitrary spatial dimensions.
\begin{cor}\label{cor:finiteRange}
Let $\Phi$ be a finite range interaction, namely $\Phi(X) = 0$ if $D(X) >R$ for a fixed $R>0$. Let $Z\in\caP_{\mathrm{fin}}(\Gamma)$ and $O\in\caL(Z)$. For any $m\in\bbN$, there is a finite depth unitary quantum circuit of depth $\caO(\epsilon^{-\frac{1}{m}})$ approximating $\tau_t^\Phi(O)$ within error $\epsilon$, as $\epsilon\to0$.
\end{cor}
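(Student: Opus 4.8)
\section*{Proof plan for the final Corollary}

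The plan is to read the statement off Theorem~\ref{Thm: TRI-Trotter} almost directly, the only genuine preliminary being to produce a decomposition $H_\Lambda=\sum_{j=1}^k K_{j,\Lambda}$ into finitely many \emph{mutually commuting} local Hamiltonians and to check that such a decomposition turns the product automorphism $\pi^{(m)}_{t,n}$ into a bona fide finite depth circuit. Since $\Phi$ has range $R$ and $\Gamma$ is $d$-dimensional, the number of sets $X\ni x$ with $D(X)\le R$ is bounded by a constant $N=N(R,d)$, uniformly in $x$; a greedy colouring of the hyperedges $\{X:\Phi(X)\ne0\}$ --- two of them conflicting when they intersect --- then splits them into $k$ colour classes, with $k$ bounded by a constant depending only on $R$ and the geometry of $\Gamma$, each class consisting of pairwise disjoint, hence mutually commuting, sets. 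Grouping the interaction accordingly we obtain interactions $\Phi_1,\dots,\Phi_k$, each of finite range and therefore in $\caB_b$ for every $b>0$, so that $\caN$ and the Lieb--Robinson velocity $v$ of $\Phi$ are finite. Moreover each factor $\tau^{\Phi_j}_s$ is, in finite volume, implemented by the product $\prod_{X}\ep{\iu s\,\Phi_j(X)}$ of commuting local unitaries supported on pairwise disjoint sets, i.e.\ by a single layer of parallel gates; in the thermodynamic limit it is the corresponding quasi-local finite depth automorphism.

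Now fix $m\in\bbN$ and apply Theorem~\ref{Thm: TRI-Trotter} to this decomposition, with $\mu=t/n$. It gives
\[
\bigl\Vert \tau^\Phi_t(O)-\pi^{(m)}_{t,n}(O)\bigr\Vert \;\le\; C\,\Inorm O\Inorm\,\vert Z\vert\;\frac{t^{\alpha+1}\ep{vt}}{n^{\alpha}},\qquad \alpha\ge m,
\]
where $C,\Inorm O\Inorm,\vert Z\vert,t,v$ are all fixed (recall $O\in\caL(Z)$, so $\Inorm O\Inorm<\infty$). Requiring the right-hand side to be at most $\epsilon$ amounts to
\[
n\;\ge\;\Bigl(C\,\Inorm O\Inorm\,\vert Z\vert\,t^{\alpha+1}\ep{vt}\Bigr)^{1/\alpha}\epsilon^{-1/\alpha},
\]
so it suffices to choose $n=n(\epsilon)$ equal to the ceiling of the right-hand side, which is $\caO(\epsilon^{-1/\alpha})$, hence $\caO(\epsilon^{-1/m})$ as $\epsilon\to0$ since $\alpha\ge m$ (for $\epsilon<1$, $\epsilon^{-1/\alpha}\le\epsilon^{-1/m}$).

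It remains to read off the depth. By the recursion~(\ref{Higher order construction}), $\sigma^{(m)}_\mu$ is a composition of a fixed number $L=L(m,k,r)$ of layer automorphisms $\tau^{\Phi_j}_\bullet$ --- from the count in the proof of Theorem~\ref{Thm: TRI-Trotter} one may take $L=r^{\lfloor (m-1)/2\rfloor}(2k-2)+1$ --- and $L$ depends only on $m$, $k$ and the fixed odd integer $r$, not on $n$ or $\epsilon$. Consequently the circuit $\pi^{(m)}_{t,n(\epsilon)}=\bigl(\sigma^{(m)}_{t/n(\epsilon)}\bigr)^{n(\epsilon)}$ has depth $L\cdot n(\epsilon)=\caO(\epsilon^{-1/m})$ and, by the first display, approximates $\tau^\Phi_t(O)$ within $\epsilon$. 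In short, this is essentially a corollary of Theorem~\ref{Thm: TRI-Trotter}: the only real ingredients are the finite colouring of the finite-range interaction (which is where $d$-dimensionality of $\Gamma$ enters, to bound the local degree) and the observation that each $\tau^{\Phi_j}$ counts as one circuit layer, so that ``depth'' is proportional to $n$; neither is hard, but both are needed for the word ``depth'' to be meaningful.
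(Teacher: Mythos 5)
Your argument is correct and follows essentially the same route as the paper: apply Theorem~\ref{Thm: TRI-Trotter}, observe that the depth of $\pi^{(m)}_{t,n}$ is proportional to $n$, and solve $C\Inorm O\Inorm\vert Z\vert t^{\alpha+1}\ep{vt}n^{-\alpha}\le\epsilon$ for $n$ using $\alpha\ge m$. You additionally spell out the greedy colouring producing the commuting decomposition and the explicit factor count $L$, details which the paper leaves to the surrounding discussion rather than the (one-line) proof, but the mathematical substance is identical.
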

\begin{proof}
The depth $h$ of the circuit $\pi_{t,n}^{(m)}$ is proportional to $n$. Hence, imposing that the bound~(\ref{main formula}) is less than $\epsilon$ yields the claim, since $\alpha\geq m$. 
\end{proof}
\noindent With~(\ref{main formula}), we further point out that, as should be expected, the depth of the circuit diverges as $\vert Z\vert^{\frac{1}{m}}$ with the volume of the support $Z$. The depth is furthermore exponential in time, but with a rate $\frac{v}{m}$ that is smaller for a higher order product automorphism. The depth of the circuit proposed in~\cite{haah2021quantum} scales as $\caO\left(t \: \mathrm{polylog}(N t\epsilon^{-1})\right)$, namely better in the total time. However, this is divergent in the size of the system $N$, and uses $\caO\left(\mathrm{polylog}(Nt\epsilon^{-1})\right)$ additional ancilla qubits to achieve the scaling.

In the context of finite range interactions and for a strictly local observable, one may wish to decompose the telescopic sum~(\ref{Telescopic}) rather as
\begin{equation}\label{FDQC:Telescopic}
 \tau^\Phi_t(O) - \pi_{t,n}^{(1)}(O)
= \sum _{j = 0}^{n-1} \left(\tau^\Phi_{\mu}\right)^j\left((\tau^\Phi_\mu -\sigma^{(1)}_\mu)\left( (\sigma^{(1)}_\mu)^{n-j-1}(O) \right)\right),
\end{equation}
since $(\sigma^{(1)}_\mu)^{n-j-1}(O)$ is strictly local. If the $K_{j,\Lambda}$'s are all commuting Hamiltonians, all factors in $(\sigma^{(1)}_\mu)^{n-j-1}$ induce no propagation beyond the range~$R$ of the interactions so that $(\sigma^{(1)}_\mu)^{n-j-1}(O)$ is strictly supported in $Z^{((n-j-1)(2k-1)R)}$ and of operator norm $O$. However, while the action of all derivations $\delta^\Phi,\delta^j$ is again strictly local, increasing the support by $R$, the bound $\Vert \delta^\Upsilon (\widetilde O)\Vert \leq C \Inorm \Upsilon\Inorm_{b,Y} \vert Y\vert \Vert \widetilde O\Vert$ valid for an observable $\widetilde O\in\caA_Y$ yields an estimate
\begin{equation*}
C \vert Z \vert^3 n^{3d} \mu^3\Vert O\Vert,
\end{equation*}
where $d$ is the spatial dimension, for every term of~(\ref{FDQC:Telescopic}). One would therefore obtain $\Vert  \tau^\Phi_t(O) - \pi_{t,n}^{(1)}(O)\Vert = \caO(n^{3d-2})$, emphasizing the need for a careful use of the Lieb-Robinson bound on the very short time intervals of width $\caO(n^{-1})$. This also shows that the \emph{physical propagation is in fact much slower} than what can be read off from the diagram in Figure~\ref{fig:FDQC}. 

\subsection{Long range interactions} While long range interactions pose no issue for our main theorem, Theorem~\ref{Thm: TRI-Trotter}, they cannot be decomposed as a finite depth unitary quantum circuit as just described. A necessary intermediate step is a truncation to finite range interaction. We now comment the error induced by neglecting the interactions between far enough lattice points.

Let $\Phi\in\caB_{b'}$ and $\Phi_R\in\caB_{b'}$ be defined by
\begin{equation*}
\Phi_R(X) = \begin{cases}
\Phi(X) & \text{, if }D(X)\leq R \\ 0 & \text{, otherwise}
\end{cases}.
\end{equation*}
Then, for any $b''<b'$,
\begin{equation*}
\sum_{X\ni x}\frac{\Vert \Phi_R(X) - \Phi(X) \Vert}{\xi_{b''}(D(X))}
= \sum_{X\ni x: D(X)>R}\frac{\Vert \Phi(X) \Vert}{\xi_{b'}(D(X))}\xi_{b'-b''}(D(X))
\end{equation*}
so that $\Inorm \Phi_R - \Phi \Inorm_{b''} \leq \xi_{b'-b''}(R+1) \Inorm \Phi\Inorm_{b'}$. Moreover, 
\begin{equation*}
\tau_t^{\Phi_R}(O) - \tau_t^{\Phi}(O) = \int_0^t \tau_s^{\Phi_R}\left(\delta^{\Phi_R} - \delta^{\Phi}\right)(\tau^{\Phi}_{t-s}(O))ds
\end{equation*}
so that if $O\in\caL_b(Z)$ with associated interaction $\Psi$, 
\begin{align*}
\left \Vert \tau_t^{\Phi_R}(O) - \tau_t^{\Phi}(O)\right\Vert
&\leq C t \sup_{s\in[0,t]}\Vert (\delta^{\Phi_R} - \delta^{\Phi})(\tau^{\Phi}_{s}(O)) \Vert \\
&\leq C t \vert Z\vert \sup_{s\in[0,t]}  \Inorm \Phi_R - \Phi \Inorm_{b''} \Inorm \tau^{\Phi}_{s}(O) \Inorm_{\tilde b,Z} \\
&\leq C t |Z|\ep{\kappa(b',\frac{1}{2}(\tilde{b}+2^{-p}b'))|t|} \Inorm\Phi\Inorm_{b'} \Inorm \Psi \Inorm_{b,Z} \xi_{b'-b''}(R+1)
\end{align*}
by Propositions~\ref{prop:DynamicsIsAlmostLocalObservable},\ref{prop:ActionOfDeltas} and Lemma~\ref{lemma: finite Z interaction}, where $\tilde b<\min\{b,2^{-p}b'\}$, since $\delta^{\Phi_R} - \delta^{\Phi} = \delta^{\Phi_R - \Phi}$. In other words, the error associated with the truncation of the interaction is superpolynomially small in the range $R$. 
In order to achieve an error $\caO(\epsilon)$, the range must be chosen as $R = \caO(\xi_{b'-b''}^{-1}(\epsilon)) = \caO((\log{\epsilon^{-\frac{1}{b'-b''}}})^{\frac{1}{p}})$. 
A commuting decomposition for an interaction of range $R$ requires of the order of $k = R^d$ terms in $d$ dimensions, and since the constant in the Trotter error is exponentially large in the number of terms, (\ref{main formula}) and the lower bound
$\exp\left((\log{\epsilon^{-\frac{1}{b'-b''}}})^{\frac{d}{p}}\right)\geq \epsilon^{-\frac{1}{b'-b''}}$ yield a circuit depth $h = \caO\left(\epsilon^{-\frac{c}{m}}\right)$ 
for a constant $c>1$ that depends on $b'$ and $b''$.

\section*{Acknowledgements}

\noindent Sven Bachmann is supported by NSERC of Canada. S.B.~would like to thank Marcel Schaub for introducing the specific interaction norm used in this work. Markus Lange was supported by NSERC of Canada and also acknowledges financial support from the European Research Council (ERC) under the European Union’s Horizon 2020 research and innovation programme (ERC StG MaMBoQ, grant agreement n.802901).

\appendix
\section{Lieb-Robinson bounds}\label{app:LR-Bounds}
In this section we show that the dynamics generated by an interactions in the class $\caB= \bigcup_{b > 0} \caB_b$ satisfy a Lieb-Robinson bound. The Banach spaces $\caB_b$ of interactions with finite $\Inorm \cdot \Inorm_b$-norm were defined in Section~\ref{sec:Setting}, where we also defined the local Hamiltonian $H\in\caL$ associated with $\Phi\in\caB$.
\begin{prop}
Let $\Phi\in\caB_{b}$ and let $\Lambda\in\caP_{\mathrm{fin}}(\Gamma)$. Let $X,Y\subset\Lambda$ with $X\cap Y = \varnothing$ and $A\in\caA_X,B\in\caA_Y$. For any $b'<b$, we have that
\begin{equation*}
\frac{\Vert [\ep{i t H_\Lambda}A\ep{-i t H_\Lambda},B]\Vert}{\Vert A\Vert \Vert B\Vert}
\leq 
\frac{2 \min\{\vert X\vert,\vert Y \vert\}}{M_{b-b'}}\left(\ep{\kappa(b,b')\vert t\vert}-1\right)\xi_{b'}(d(X,Y))
\end{equation*}
where $\kappa(b,b') = 2\Inorm \Phi\Inorm_{b}M_{b-b'}$ and $M_{\epsilon} := \sup\{\vert X\vert\xi_{\epsilon}(D(X)):X\in\caP_{\mathrm{fin}}(\Gamma)\}$.
\end{prop}
\begin{proof}
We denote $\tau_t^\Lambda(A) = \ep{i t H_\Lambda}A\ep{-i t H_\Lambda}$ and let $f(t) = [\tau_t^\Lambda(A),B]$. Then $f'(t) = \sum_{Z\cap X\neq\varnothing}[i\tau_t^\Lambda([\Phi(Z),A]),B]$ and by Jacobi's identity,
\begin{equation*}
f'(t) = -i[f(t),\sum_{Z\cap X\neq\varnothing}  \tau_t^\Lambda(\Phi(Z))] - \sum_{Z\cap X\neq\varnothing}  i[[B,\tau_t^\Lambda(\Phi(Z))],\tau_t^\Lambda(A)].
\end{equation*}
The first term being norm preserving, we conclude that
\begin{equation*}
\frac{\Vert f(t) \Vert}{\Vert A\Vert} \leq \frac{\Vert [A,B] \Vert}{\Vert A\Vert} 
+ 2  \sum_{Z\cap X\neq\varnothing}\Vert \Phi(Z)\Vert \int_0^{\vert t\vert} \frac{\Vert [\tau_s^\Lambda(\Phi(Z)),B] \Vert}{\Vert \Phi(Z)\Vert}ds,
\end{equation*}
see Lemma~A.1 in~\cite{NacOgaSim06}, namely
\begin{equation*}
C_B(X,t) \leq C_B(X,0) + 2  \sum_{Z\cap X\neq\varnothing}\Vert \Phi(Z)\Vert \int_0^{\vert t\vert} C_B(Z,s)ds
\end{equation*}
where $C_B(X,t) = \sup\{\Vert A\Vert^{-1} \Vert [\tau_t^\Lambda(A),B] \Vert :A\in\caA_X\}$. Iterating this step, it follows that
\begin{equation*}
C_B(X,t) \leq C_B(X,0) +\sum_{n=1}^\infty \frac{2^n\vert t\vert^n}{n!}
\sum_{Z_{n}\cap Z_{n-1}\neq\varnothing}\cdots \sum_{Z_{1}\cap  X\neq\varnothing}  C_B(Z_n,t)\prod_{j=1}^n\Vert \Phi(Z_j)\Vert.
\end{equation*}
Since $C_B(Z,0) = 0$ whenever $Z\cap Y = \varnothing$ and $C_B(Z,0) \leq 2\Vert B\Vert$ otherwise, we conclude that
\begin{equation}\label{Cb bound}
C_B(X,t) \leq 2\Vert B\Vert \delta_{X,Y} + 
2\Vert B\Vert \sum_{n=1}^\infty \frac{2^n\vert t\vert^n}{n!} a_n
\end{equation}
where $\delta_{X,Y} = 0$ if $X\cap Y =\varnothing$ and $\delta_{X,Y} = 1$ otherwise, and we denote $a_n=a_n(X,Y) = \sum_{\substack{Z_{n}\cap Z_{n-1}\neq\varnothing\\Z_n\cap Y\neq\varnothing}}\cdots \sum_{Z_{1}\cap  X\neq\varnothing}  \prod_{j=1}^n\Vert \Phi(Z_j)\Vert$. We claim that
\begin{equation}\label{an}
a_n(X,Y)\leq M_{b-b'}^{n-1}\Inorm \Phi\Inorm_{b}^n\sum_{x\in X}\xi_{b'}(d(x,Y)).
\end{equation}
First of all,
\begin{align*}
 a_1(X,Y) &\leq \sum_{X\cap Z\neq\varnothing,Y\cap Z\neq\varnothing}\Vert\Phi(Z)\Vert
 \leq \sum_{x\in X}\sum_{Z\ni x:Y\cap Z\neq\varnothing}\frac{\Vert\Phi(Z)\Vert}{\xi_{b'}(D(Z))}\xi_{b'}(D(Z))\\
 &\leq \Inorm \Phi\Inorm_{b'}\sum_{x\in X}\xi_{b'}(d(x,Y))
\end{align*}
by monotonicity of $\xi_{b'}$, since $d(x,Y) \leq D(Z)$. The same inequality holds with $X$ and $Y$ exchanged. This is~(\ref{an}) for $n=1$ since $b'<b$ implies that $\Inorm \Phi\Inorm_{b'} \leq \Inorm \Phi\Inorm_{b}$.
We continue by induction, obtaining
\begin{align*}
a_{n+1}(X,Y) 
& = \sum_{Z_{1}\cap  X\neq\varnothing} \Vert \Phi(Z_1)\Vert a_n(Z_1,Y)\\
&\leq M_{b-b'}^{n-1}\Inorm \Phi\Inorm_{b}^n\sum_{x\in X}\sum_{Z_1\ni x}\frac{\Vert\Phi(Z_1)\Vert}{\xi_{b}(D(Z_1))}\xi_{b}(D(Z_1))\sum_{z\in Z_1}\xi_{b'}(d(z,Y))
\end{align*}
We factorize $\xi_b(r) = \xi_{b-b'}(r)\xi_{b'}(r)$ and bound
\begin{equation*}
\xi_{b'}(D(Z_1)) \xi_{b'}(d(z,Y))
\leq \xi_{b'}(d(x,z) + d(z,Y))
\leq \xi_{b'}(d(x,Y))
\end{equation*}
by monotonicity and logarithmic subadditivity, since $D(Z_1)\geq d(x,z)$. With this,
\begin{equation*}
a_{n+1}(X,Y) 
\leq M_{b-b'}^{n}\Inorm \Phi\Inorm_{b}^{n+1}\sum_{x\in X}\xi_{b'}(d(x,Y))
\end{equation*}
as announced.

The sets $X,Y$ appearing symmetrically in the estimates above, the right hand side of~(\ref{an}) can be improved to the minimum of $\sum_{x\in X}\xi_{b'}(d(x,Y))$ and $\sum_{y\in Y}\xi_{b'}(d(y,X))$. It remains to use
\begin{equation*}
\min\Big\{\sum_{x\in X}\xi_{b'}(d(x,Y)),\sum_{y\in Y}\xi_{b'}(d(y,X)) \Big\}\leq \min\{\vert X\vert,\vert Y\vert\} \xi_{b'}(d(X,Y))
\end{equation*}
and to plug the resulting bound into~(\ref{Cb bound}) to get
\begin{equation*}
C_B(X,t) \leq 2\Vert B\Vert \left[\delta_{X,Y} + \frac{\min\{\vert X\vert,\vert Y\vert\} \xi_{b'}(d(X,Y))}{M_{b-b'}} \sum_{n=1}^\infty \frac{(2\vert t\vert\Inorm \Phi\Inorm_{b}M_{b-b'})^n}{n!}\right].
\end{equation*}
If $X\cap Y = \varnothing$, this reads
\begin{equation*}
\Vert [\tau_t^\Lambda(A),B] \Vert \leq  \frac{2\Vert A\Vert\Vert B\Vert \min\{\vert X\vert,\vert Y\vert\} }{M_{b-b'}} \left(\ep{2\vert t\vert\Inorm \Phi\Inorm_{b}M_{b-b'}}-1\right) \xi_{b'}(d(X,Y))
\end{equation*}
which is the claim of the proposition.
\end{proof}
\noindent The proof yields the following bound that is valid for any $X,Y$ not necessarily disjoint:
\begin{multline*}
\frac{\Vert [\ep{i t H_\Lambda}A\ep{-i t H_\Lambda},B]\Vert}{\Vert A\Vert \Vert B\Vert}\\
\leq \frac{2 \Vert A\Vert\Vert B\Vert}{M_{b-b'}}
g_{b,b'}(t)
\min\Big\{\sum_{x\in X}\xi_{b'}(d(x,Y)),\sum_{y\in Y}\xi_{b'}(d(y,X)) \Big\}
\end{multline*}
where $g_{b,b'}(t) = \ep{\kappa(b,b')\vert t\vert}-(1-\delta_{X,Y}$).

As pointed out earlier, the proof runs along the general lines of~\cite{NacOgaSim06}. It only differs in the estimate of $a_n$ because of the choice of a different norm and a slightly more general class of interactions. In particular, in the case $p<1$, the subexponential decay in $d(X,Y)$ has its origin in the subexponential decay of the interaction~$\Phi\in\caB_b$.

It is also a well-known fact that the Lieb-Robinson bound yields the existence of the dynamics in the infinite volume limit. We provide here a short proof in the specific setting of this paper. We now consider an increasing sequence of subsets $\Lambda_n$ that is absorbing in the sense that for any $x\in\Gamma$, there is $N$ such that $x\in\Lambda_n$ for all $n\geq N$. 

\begin{cor}
Let $\Phi\in\caB_{b}$, let $X\in\caP_{\mathrm{fin}}(\Gamma)$ and $A\in\caA_X$. For all $n$ such that $X\subset \Lambda_n$, let $\tau^n_t(A) = \ep{i t H_{\Lambda_n}}A\ep{-i t H_{\Lambda_n}}$. The sequence $\tau^n_t(A)$ is convergent to $\tau_t(A)$. Moreover, $t\mapsto\tau_t$ extends to a strongly continuous group of automorphisms on $\caA$. 
\end{cor}
\begin{proof}
We note that if $n>m$, then
\begin{equation*}
\tau^m_{-t}\circ \tau^n_t(A) - A 
= \int_0^t \tau^m_{-s}\left(i[H_{\Lambda_n}- H_{\Lambda_m},\tau^n_s(A)]\right)ds
\end{equation*}
Since $H_{\Lambda_n}- H_{\Lambda_m} = \sum_{Z\cap(\Lambda_n\setminus\Lambda_m)\neq\varnothing}\Phi(Z)$, we conclude that
\begin{equation*}
\Vert \tau^n_t(A) - \tau^m_t(A)\Vert
\leq \sum_{Z\cap(\Lambda_n\setminus\Lambda_m)\neq\varnothing} \int_0^{\vert t\vert} \Vert [\Phi(Z) , \tau^n_s(A)]\Vert ds
\end{equation*}
The Lieb-Robinson bound now yields
\begin{equation*}
\sum_{Z\cap(\Lambda_n\setminus\Lambda_m)\neq\varnothing} \Vert [\Phi(Z) , \tau^n_s(A)]\Vert
\leq \frac{2\Vert A\Vert\ep{\kappa(b,b')\vert s\vert}}{M_{b-b'}} \sum_{Z\cap(\Lambda_n\setminus\Lambda_m)\neq\varnothing}  \Vert \Phi(Z)\Vert \sum_{x\in X}\xi_{b'}(d(x,Z))
\end{equation*}
for any $b'<b$. The sum over $Z$ can be upper bounded by $\sum_{z\in\Lambda_n\setminus\Lambda_m}\sum_{Z\ni z}$. We introduce the factor $\xi_{b'}(D(Z))$ and use the logarithmic superadditivity of $\xi_{b'}$ and finally the bound $d(x,z) \leq d(x,Z) + D(Z)$ to get
\begin{multline*}
\sum_{Z\cap(\Lambda_n\setminus\Lambda_m)\neq\varnothing}  \Vert \Phi(Z)\Vert \sum_{x\in X}\xi_{b'}(d(x,Z)) \\
\leq \sum_{z\in\Lambda_n\setminus\Lambda_m}\sum_{Z\ni z} \frac{\Vert \Phi(Z)\Vert}{\xi_{b'}(D(Z))}
\sum_{x\in X}\xi_{b'}(d(x,Z) + D(Z)) 
\end{multline*}
Note that $\Phi\in\caB_b$ implies that $\Inorm \Phi\Inorm_{b'}<\infty$ for all $b'<b$. It remains to use the bound $d(x,z) \leq d(x,Z) + D(Z)$ to get
\begin{equation*}
\Vert \tau^n_t(A) - \tau^m_t(A)\Vert
\leq \frac{2\Vert A\Vert\Inorm \Phi\Inorm_{b'}} {M_{b-b'}\kappa(b,b')} \ep{\kappa(b,b')|t|}\sum_{z\in\Lambda_n\setminus\Lambda_m}\sum_{x\in X}\xi_{b'}(d(x,z)).
\end{equation*}
The summability of $\xi_{b'}$ and the fact that $X$ is a finite set implies that the sum vanishes as $n,m\to\infty$. In other words, $(\tau^n_t(A))_{n}$ is Cauchy sequence in $\caA$ (uniformly in $t$ for $t$ in a compact set) and hence convergent. 

The limiting map $\tau_t$ is bounded (since $\Vert \tau_t(A)\Vert = \Vert A\Vert $) on the dense set of local observables. Therefore, it extends to a bounded linear map on $\caA$. The group property follows from that of the finite volume approximations.
\end{proof}

\end{document}